\def\colorPtp{\color{blue}}
\def\colorNode{\color{cyan}}
\def\colorOp{\color{OliveGreen}}
\def\colorMsg{\color{BrickRed}}
\newcommand{\aG}{\mathsf{G}}
\newcommand{\gatedistancein}{3pt}
\newcommand{\gatedistanceinand}{2pt}
\newcommand{\gname}[1][i]{{\colorNode{\scriptstyle\textsf{#1}}}}
\tikzset{
  src/.style={draw,circle,fill=white,
    minimum size=2mm,
    inner sep=0pt
  },
  sink/.style={draw,circle,double,fill=white,
    minimum size=1.5mm,
    inner sep=0pt
  },
  node/.style={draw,circle,fill=black,
    minimum size=2mm,
    inner sep=0pt
  },
  source/.style={draw,circle,fill=white,
    minimum size=3mm,
    inner sep=0pt
  },
  sink/.style={draw,circle,double,fill=white,
    minimum size=3mm,
    inner sep=0pt
  },
  block/.style = {rectangle, draw=gray, align=center, fill=orange!25, rounded corners=0.1cm,
    minimum size=5mm, inner sep=2pt},
  prenode/.style = {minimum size=9pt,inner sep=2pt, font=\Large},
  bblock/.style = {rectangle, draw=blue!50, opacity=.5, line width=1pt, align=center, fill=white, rounded corners=0.1cm,
    minimum size=7mm, inner sep=2pt},
  prenode/.style = {minimum size=9pt,inner sep=2pt, font=\Large},
  agate/.style={draw, rectangle,
    minimum size=3mm,
    inner sep=0pt,
    fill=orange!25,
    postaction={path picture={%
        \draw[red]
        ([yshift=\gatedistanceinand]path picture bounding box.south) --
        ([yshift=-\gatedistanceinand]path picture bounding box.north) ;}}
  },
  ogate/.style = {
    diamond, draw, fill=orange!25,
    minimum size=4mm,
    inner sep=0pt,
    postaction={path picture={%
        \draw[red]
        ([yshift=\gatedistancein]path picture bounding box.south) -- ([yshift=-\gatedistancein]path picture bounding box.north)
        ([xshift=-\gatedistancein]path picture bounding box.east) -- ([xshift=\gatedistancein]path picture bounding box.west)
        ;}}},
  altogate/.style = {
    diamond, draw,
    minimum size=4mm,
    inner sep=0pt,
    postaction={path picture={%
        \draw
        ([yshift=\gatedistancein]path picture bounding box.south) -- ([yshift=-\gatedistancein]path picture bounding box.north)
        ([xshift=-\gatedistancein]path picture bounding box.east) -- ([xshift=\gatedistancein]path picture bounding box.west)
        ;}}},
  altgate/.style={draw, rectangle,
    minimum size=3mm,
    inner sep=0pt,
    postaction={path picture={%
        \draw
        ([yshift=\gatedistanceinand]path picture bounding box.south) --
        ([yshift=-\gatedistanceinand]path picture bounding box.north) ;}}},
  anygate/.style = {circle, draw, fill=white,
    minimum size=4mm,
    inner sep=0pt,
    postaction={path picture={%
        \draw[black]
        ([xshift=-\gatedistancein,yshift=\gatedistancein]path picture bounding box.south east) --
        ([xshift=\gatedistancein,yshift=-\gatedistancein]path picture bounding box.north west)
        ([xshift=-\gatedistancein,yshift=-\gatedistancein]path picture bounding box.north east) --
        ([xshift=\gatedistancein,yshift=\gatedistancein]path picture bounding box.south west)
        ;}}
  },
  smallglobal/.style={
        node distance=1cm and 0.8cm, semithick, scale=0.8, every node/.style={transform shape}
  },
  elli/.style = {draw,densely dotted,-},
  %
  line/.style = {draw,->, rounded corners=0.07cm,>=latex},
  nline/.style = {draw,semithick, ->},
  pline/.style = {draw,->,>=latex},
  node distance=1cm and 0.7cm,
  baseline=(current  bounding  box.center),
  local/.style={rectangle, draw, fill=\fillcolor, drop shadow,
    text centered, rounded corners, minimum height=5em
  },
  bigar/.style={
    draw,very thick, ->
  },
  process/.style={rectangle, draw=gray, fill=\fillcolor, drop shadow,
    text centered, minimum height=5em,text=gray
  },
  choreo/.style={rectangle, draw, fill=\fillcolor, drop shadow,
    text centered, rounded corners, minimum height=5em
  },
  mycfsm/.style={
        font=\footnotesize,
        initial where=above,
        ->,>=stealth,auto, node distance=1cm and 1cm,
        scale=1, every node/.style={transform shape},
        every state/.style=inner sep=2pt,
        baseline=(current  bounding  box.center)
  },
  machinecloud/.style={
    cloud, cloud puffs=10, cloud ignores aspect, minimum height=.1cm, minimum width=2cm, draw
  },
  fitting node/.style={
    inner sep=0pt,
    fill=none,
    draw=none,
    reset transform,
    fit={(\pgf@pathminx,\pgf@pathminy) (\pgf@pathmaxx,\pgf@pathmaxy)}
  },
  mypetri/.style={
    font=\footnotesize,
    baseline=(current  bounding  box.center)
  },
  silentrans/.style = {rectangle, draw=black, align=center, fill=black,
    minimum height=1pt,
    minimum width=15pt,
    inner sep=1.5pt
  },
  reset transform/.code={\pgftransformreset},
  tmtape/.style={draw,minimum size=1.2cm}
}
\newcommand{\p}{\ptp}
\newcommand{\q}{{\ptp[B]}}
\newcommand{\msg}[1][m]{\mathsf{\colorMsg{#1}}}
\newcommand{\ifempty}[3]{%
  \ifthenelse{\isempty{#1}}{#2}{#3}%
}
\newcommandx{\nmerge}[2][1={i},2={},usedefault=@]{
  \ifempty{#2}{
    \ifempty{#1}{\mu}{-\gname[{#1}]}
  }{-{#2}}
}
\newcommandx{\gint}[4][1=i,2=\ptp,3=\msg,4=\q,usedefault=@]{
  \ptp[{#2}] {\colorOp \xrightarrow{\scriptscriptstyle\gname[#1]}} \ptp[{#4}] \colon {\msg[{#3}]}
}
\newcommandx{\gout}[4][1=\gname,2=\ptp,3=\msg,4={\ptp[C]},usedefault=@]{
  \achan[{#2}][{#4}] {\colorOp {\colorOp{!}}} {\msg[{#3}]}
}
\newcommandx{\gin}[4][1=\gname,2=\ptp,3=\msg,4={\ptp[C]},usedefault=@]{
  \achan[{#2}][{#4}] {\colorOp {\colorOp{?}}} {\msg[{#3}]}
}
\newcommandx{\gseq}[3][1=i,2={\aG},3={\aG'},usedefault=@]{
  \gnode[{#1}][{#2} \gseqop {#3}]
}
\newcommandx{\gpar}[3][1=i,2={\aG},3={\aG'},usedefault=@]{
  \gnode[{#1}][\ifempty{#1}{{#2} \gparop {#3}}{({#2} \gparop {#3})}]
}
\newcommandx{\gcho}[3][1=i,2={\aG},3={\aG'},usedefault=@]{
  \gnode[{#1}][\ifempty{#1}{{#2} \gchoop {#3}}{\big({#2} \gchoop {#3}\big)}]
}
\newcommandx{\gchov}[3][1=i,2={\aG},3={\aG'},usedefault=@]{
  \gnode[{#1}][\left(
  \begin{array}l
    \ifempty{#1}{{#2} \\ \gchoop \\ {#3}}{\!\!{#2} \\ \gchoop \\ {#3}}
  \end{array}\right)
  ]
}
\newcommandx{\grec}[3][1=i,2={\aG},3={\p},usedefault=@]{
  \gnode[{#1}][\ifempty{#1}{\grecop {#2} \grecopp {#3}}{\big(\grecop {#2} \grecopp {#3}\big)}]
}
\newcommand{\ptp}[1][A]{
  \ensuremath{\mathtt{\colorPtp{\capitalisewords{#1}}}}
}
\newcommandx{\mkint}[6][3=i,4=\p,5=\msg,6=\q,usedefault=@]{
  \node[bblock,{#1}] (#2) {$\gint[#3][#4][#5][#6]$};
}
\newcommand{\mkseq}[2]{\path[line] (#1) -- (#2);}
\newcommandx{\mkgraph}[3][1=.5cm]{
  \node[source,above = #1 of {#2}] (src#2) {};
  \node[sink,below  = #1 of {#3}] (sink#3) {};
  \path[line] (src#2) -- (#2);
  \path[line] (#3) -- (sink#3);
}
\newcommandx{\mkloop}[4][1=.5,2=1.5]{ 
  \node[ogate,above = #1 of {#3}] (entry#3) {};
  \pgfgetlastxy \xentry \yentry;
  \pgfmathtruncatemacro{\xentryrounded}{\xentry};
  \node[ogate,below  = #1 of {#4}] (exit#4) {};
  \pgfgetlastxy \xexit \yexit;
  \pgfmathtruncatemacro{\xexitrounded}{\xexit};
  \path[line] (entry#3) -- (#3);
  \path[line] (#4) -- (exit#4);
  \pgfmathsetmacro\tmpdiff{abs(\xentryrounded - \xexitrounded)}
  \path[line] (exit#4) -|  ($(exit#4)+(\tmpdiff,0)+(#2,0)$) |- (entry#3);
}
\newcommandx{\mklooptwo}[4][1=.5,2=1.5]{ 
  \node[ogate,above = #1 of {#3}] (entry#3) {};
  \pgfgetlastxy \xentry \yentry;
  \pgfmathtruncatemacro{\xentryrounded}{\xentry};
  \path (#4);
   \pgfgetlastxy \xexit \yexit;
  \pgfmathtruncatemacro{\xexitrounded}{\xexit};
  \path[line] (entry#3) -- (#3);
  \pgfmathsetmacro\tmpdiff{abs(\xentryrounded - \xexitrounded)}
  \path[line] (#4) -|  ($(#4)+(\tmpdiff,0)+(#2,0)$) |- (entry#3);
}
\newcommandx{\mklooptwobelow}[4][1=.5,2=1.5]{ 
  \node[ogate,above = #1 of {#3}] (entry#3) {};
  \pgfgetlastxy \xentry \yentry;
  \pgfmathtruncatemacro{\xentryrounded}{\xentry};
  \path (#4);
   \pgfgetlastxy \xexit \yexit;
  \pgfmathtruncatemacro{\xexitrounded}{\xexit};
  \path[line] (entry#3) -- (#3);
  \pgfmathsetmacro\tmpdiff{abs(\xentryrounded - \xexitrounded)}
  \path[line] (#4)  |- ($(#4)+(0,-0.5)$) -|  ($(#4)+(\tmpdiff,0)+(#2,0)$) |- (entry#3);
}
\newcommandx{\mklooponetwo}[4][1=.5,2=1.5]{ 
  \path (#3);
  \pgfgetlastxy \xentry \yentry;
  \pgfmathtruncatemacro{\xentryrounded}{\xentry};
  \path (#4);
   \pgfgetlastxy \xexit \yexit;
  \pgfmathtruncatemacro{\xexitrounded}{\xexit};
  \path[line] (entry#3) -- (#3);
  \pgfmathsetmacro\tmpdiff{abs(\xentryrounded - \xexitrounded)}
  \path[line] (#4) -|  ($(#4)+(\tmpdiff,0)+(#2,0)$) |- (entry#3);
}
\newcommandx{\mkfork}[4][2=gatenode,3=i,4=.6,usedefault=@]{
  \mkgatebegin{#1}[{\gname[#3]}][agate][#4]{#2}
}
\newcommandx{\mkbranch}[4][2=gatenode,3=i,4=.6,usedefault=@]{
  \mkgatebegin{#1}[{\gname[#3]}][ogate][#4]{#2}
}
\newcommandx{\mkgatebegin}[5][2={},3=ogate,4=.5]{
  %
  \coordinate (gatecord) at (0,0);
  \foreach \n [count=\i] in {#1}{
    \pgfgetlastxy \xc \yc;
    \path (\n);
    \pgfgetlastxy \xn \yn;
    \coordinate (gatecord) at ($(gatecord) + (\xn,0)$);
    \coordinate (gatecord) at ($1/\i*(gatecord)$);
    \ifdim \yn < \yc
    \node (max) at (0,\yc) {};
    \else
    \node (max) at (0,\yn) {};
    \fi
  }
  \coordinate (gatecord) at ($(gatecord) + (0,#4) + (max)$);
  \node[#3,label={below:$#2$}] (#5) at (gatecord) {};
  \pgfgetlastxy{\xgate}{\ygate};
  \pgfmathtruncatemacro{\xgateround}{\xgate};
  \StrCount{#1,}{,}[\l] 
  \ifnum \l < 2 {\errmessage{#1 argument should be a comma-separated list of lenght >= 2}}
  \else{
    \foreach \n in {#1}{
      \path (\n);
      \pgfgetlastxy{\xnode}{\ynode};
      \pgfmathtruncatemacro{\xnround}{\xnode};
      \pgfmathsetmacro\tmpdiff{abs(\xnround - \xgateround)}
      \ifdim \tmpdiff pt > 1 pt \path[line] (#5) -| (\n);
      \else
        \path[line] (#5) -- (\n);
      \fi
    }
  }
  \fi
}
\newcommandx{\mkmerge}[4][2=gatenode,3=i,4=0,usedefault=@]{\mkgateend{#1}[{\ifempty{#3}{}{\nmerge[#3]}}][ogate][#4]{#2}}
\newcommandx{\mkjoin}[4][2=gatenode,3=i,4=0,usedefault=@]{\mkgateend{#1}[{\ifempty{#3}{}{\nmerge[#3]}}][agate][#4]{#2}}
\newcommandx{\mkgateend}[5][2={},3=ogate,4=.5]{
  %
  \coordinate (gatecord) at (0,0);
  \foreach \n [count=\i] in {#1}{
    \pgfgetlastxy \xc \yc;
    \path (\n);
    \pgfgetlastxy \xn \yn;
    \coordinate (gatecord) at ($(gatecord) + (\xn,0)$);
    \coordinate (gatecord) at ($1/\i*(gatecord)$);
    \ifdim \yn > \yc
    \node (min) at (0,\yc) {};
    \else
    \node (min) at (0,\yn) {};
    \fi
  }
  \coordinate (gatecord) at ($(gatecord) - (0,#4) + (min)$);
  \node[#3,label={above:$#2$}] (#5) at (gatecord) {};
  \pgfgetlastxy{\xgate}{\ygate};
  \pgfmathtruncatemacro{\xgateround}{\xgate};
  \StrCount{#1,}{,}[\l] 
  \ifnum \l < 2 {\errmessage{#1 argument should be a comma-separated list of lenght >= 2}}
  \else{
    \foreach \n in {#1}{
      \path (\n);
      \pgfgetlastxy{\xnode}{\ynode};
      \pgfmathtruncatemacro{\xnround}{\xnode};
      \pgfmathsetmacro\tmpdiff{abs(\xnround - \xgateround)}
      \ifdim \tmpdiff pt > 1 pt \path[line] (\n) |- (#5);
      \else
        \path[line] (\n) -- (#5);
      \fi
    }
  }
  \fi
}
\newtheorem{definition}{Definition}[section]
\newtheorem{lemma}[definition]{Lemma}
\newtheorem{proposition}[definition]{Proposition}
\newtheorem{corollary}[definition]{Corollary}
\newtheorem{remark}[definition]{Remark}
\newtheorem{fact}[definition]{Fact}
\DeclareMathAlphabet{\mathpzc}{OT1}{pzc}{m}{it}
\newcommand{\Comment}[1]{ }
\def\Pred[#1]{~[\,#1\,]}
\newcommand{\projecton}[2]{#1\!\!\downharpoonright\!#2}
\newcommand{\edgelabel}[3]{{\tt #1}#2{\sf #3}}
\newcommand{\GG}{{\bf G}}
\newcommand{\gt}{$\mathcal{G}\!\!\mathcal{T}$}
\newcommand{\gtir}{\gt\!-\textsc{ir}}
\newcommand{\GTIR}[2]{{[#1]^{\!\langle#2\rangle}}}
\newcommand{\II}{{\tt I}}
\newcommand{\JJ}{{\tt J}}
\newcommand{\HH}{{\tt H}}
\newcommand{\KK}{{\tt K}}
\renewcommand{\AA}{\texttt{A}}
\newcommand{\BB}{\texttt{B}}
\newcommand{\roles}{\mathbf{P}}
\newcommand{\interfacecomp}{\!\leftrightarrow\!}
\newcommand{\components}{\mathcal{C}}
\newcommand{\connect}[2]{\stackrel{\hspace{-3pt}#1\!\leftrightarrow\!#2\!}{}}
\newcommand{\restrict}[2]{{#1}_{\mid_{\mathbf{#2}}}}
\newcommand{\restrictup}[2]{{#1}^{\mid{\mathbf{#2}}}}
\newcommand{\ttp}{\mathtt{p}}
\newcommand{\ttq}{\mathtt{q}}
\newcommand{\ttr}{\mathtt{r}}
\newcommand{\tts}{\mathtt{s}}
\newcommand{\mC}{{\not\mathit{C}}}
\newcommand{\lang}[1]{\mathcal{L}(#1)}
\newcommand{\gateway}[1]{\mathsf{gw}(#1)}
\newcommand{\elle}{\mathit{l}}
\newcommand{\Dual}[1]{\overline{#1}}
\newcommand{\Sem}[1]{[\hspace{-0.6mm}[ #1 ]\hspace{-0.6mm}]}
\newcommand{\Set}[1]{\{#1\}}
\renewcommand{\implies}{~\Longrightarrow~ }
\newcommand{\lts}[1]{\stackrel{#1}{\longrightarrow}}
\newcommand{\ltsone}[1]{\stackrel{\!#1}{\longrightarrow_{\!1}}}
\newcommand{\Act}{\mathit{ Act}}
\newcommand{\preGTIR}{\textit{pre-GTIR}}
\title{Global Types for Open Systems
\footnote{The first two authors were partially supported
by the COST Action EUTYPES CA-15123
and by, respectively,
Project ``Chance'' of the University of Catania and
Project FORMS 2015 of the University of Torino.}
}
\author{Franco Barbanera
\institute{Dipartimento di Matematica e Informatica\\
University of Catania}
\email{barba@dmi.unict.it}
 \and 
Ugo de'Liguoro
\institute{Dipartimento di Informatica\\
University of Torino}
\email{ugo.deliguoro@unito.it}
\and 
Rolf Hennicker
\institute{Institute of Informatics,
LMU Munich}
\email{hennicker@ifi.lmu.de}
}
\begin{document}

\setlength{\abovedisplayskip}{6pt}
\setlength{\belowdisplayskip}{\abovedisplayskip}

\maketitle

\begin{abstract}
Global-type formalisms enable to describe the overall behaviour of distributed systems and at the same time to enforce  safety properties for communications between system components.
Our goal is that of amending a weakness of such formalisms:
the difficulty in describing {\em open} systems, i.e.\ systems which can be connected and interact with other open systems.
We parametrically extend, with the notion of {\em interface role} and {\em interface connection},
the syntax of global-type formalisms.
Semantically, global types with interface roles denote open systems of communicating finite state machines connected by means of 
{\em gateways} obtained from compatible interfaces.
We show that safety properties are preserved when
open systems are connected that way.
\end{abstract}

\section{Introduction}
\label{sect:Intro}

The intrinsic difficulties programmers have to face when developing and verifying distributed applications 
have been variously attacked by the theoretical computer science community with the aim of devising  formal systems
enabling
(1) to describe in a structured way the overall behavior of a system, and
(2)  to steer the implementation
of the system components, guaranteeing their compliance with the overall behaviour together with some relevant 
 properties of communications.

Several formalisms based on the notion of {\em global type} have been  proposed in the literature to pursue  such an aim
\cite{CHY07,CDP12,CDYP16}.  
The expressiveness of the investigated formalisms kept on increasing during the last decade, recently leading to
representations of global behaviours as graphs \cite{DY12,TY15,TG18}, where the local
end-point projections are interpreted by communicating finite state machines (CFSMs), a widely investigated
formalism for the description and the analysis of distributed systems \cite{BZ83}.
For systems of CFSMs, most of the relevant properties of communications are, in general, undecidable
\cite{CF05} or computationally hard. Instead, systems of CFSMs obtained by projecting 
the generalised
global types of~\cite{DY12} or the global graphs of~\cite{TY15,TG18} (more precisely those which satisfy a {\em well-formedness} condition)
are guaranteed to satisfy desired properties of communications like
 deadlock-freeness, that any sent message is eventually consumed
 or that each participant will eventually receive any message s/he is waiting for~\cite{DY12}.
   
The centralised viewpoint offered by the global type approaches makes them naturally suitable for describing
{\em closed} systems. This prevents a system described/developed by means of global types
to be looked at as a module that can be connected to other systems.
The description and analysis of {\em open} systems has been investigated, instead, in the context of CFSMs in~\cite{HB18,H16}
and for synchronous communication in the context of interface automata in~\cite{deAlfaro2001,deAlfaro2005}. 
In the present paper we address  the problem of
generalising the notion of {\em global type} in order to encompass  the description of {\em open systems}
and, in particular, open systems of CFSMs; so paving the way towards a fruitful interaction between the investigations on open systems carried out in automata theory and those on global types. \\

In our approach, an ``open global type'' -- that we dub ``global type with interface roles'' (GTIR) -- denotes
a number of connected open systems of CFSMs  where some participants (roles\footnote{We prefer to use the word {\em role} rather then {\em participant} since {\em interface role}
sounds more suitable for the present setting  than {\em interface participant.}}) are identified as interfaces rather than proper participants.
We have no necessity to stick to any particular global type formalism as a basis for our GTIRs,
as long as the local end-point behaviours of a global type $G$ can be interpreted as CFSMs.
So we introduce a {\em parametric} syntax which, given a global type formalism \gt,
extends its syntax by essentially
enabling to identify some roles as {\em interface roles} and to 
define a  composition of open global types, semantically interpreted by systems of CFSMs.
We call \gtir\ (\gt-with-\textsc{i}nterface-\textsc{r}oles) the so obtained formalism.

Syntactically, a GTIR is either a global type $G$ (formulated in \gt) together with a distinguished subset of the roles of $G$ declared as interface roles, or it is 
a composite expression where two GTIRs are composed via compatible interfaces.
The non-connected interface roles remain open after composition.  
The semantics of a GTIR is always a set of CFSMs. In the case of a basic GTIR,
i.e.\ a global type $G$ equipped with interface roles,
it is just the set of CFSMs obtained by projecting $G$ to its end-point CFSMs.
Those CFSMs interpreting interface roles model
the expected behaviour of an external environment of the open system.
Interface roles are \emph{compatible} if their CFSMs have no mixed states, are input and output deterministic
and if their languages are dual to each other. 
If a GTIR $\GG$ is a composite expression, composing GTIRs
$\GG_1$ and $\GG_2$ via compatible interface roles $\HH$ and $\KK$,
then the semantics of $\GG$ is the union of the two CFSM systems
denoted by $\GG_1$ and $\GG_2$ where the CFSMs
$M_{\HH}$ and $M_{\KK}$, interpreting
the interface roles $\HH$ and $\KK$, are replaced by appropriate gateway CFSMs
$\gateway{M_\HH,\KK}$ and $\gateway{M_\KK,\HH}$.
These gateways are
constructed by a simple algorithm out of $M_{\HH}$ and $M_{\KK}$.

A main objective of our work is to study the preservation of safety properties under composition. We consider three kinds of properties: deadlock-freedom, freedom of orphan messages and freedom of unspecified receptions following the definitions in~\cite{DY12} (which in turn follow definitions in~\cite{CF05}).
The main result of the present paper is that these
safety properties hold for the CFSM system $S$ denoted by a GTIR $\GG$
whenever they hold singularly for all subsystems $S_i$ obtained by the semantics of the 
global types $G_i$ that are used for the construction of $\GG$.
In particular, it has been shown in~\cite{DY12} that the safety properties are ensured whenever the  $G_i$'s are
{\em well-formed}  generalised global types

\noindent
{\em Overview.} In Section~\ref{sect:cfsm} the main definitions concerning CFSMs and systems of CFSMs are recalled, together with the definitions of safety properties. Syntax and semantics of global types with interface roles are introduced in Section~\ref{sec:gtir}. 
Section~\ref{sect:safetypreservation} studies the preservation of safety properties for systems connected via gateways and proves
our main result. 
Section~\ref{sect:conclusions} concludes
by pointing to related work and describing directions for further investigations.


\section{Systems of Communicating Finite State Machines}
\label{sect:cfsm}
In this section we recall (partly following \cite{CF05,DY12,TY15}) the definitions of communicating finite state machine (CFSM) and  systems of CFSMs.
Throughout the paper we assume given a countably infinite set  
$\roles_\mathfrak{U}$ of role (participant) names (ranged over by $\ttp,\ttq,\ttr,\tts,\tt{A},\tt{B},\HH,\II,\ldots$) and a countably infinite alphabet $\mathbb{A}_\mathfrak{U}$ (ranged over by $a,b,c,\ldots$)
of messages.\\

\begin{definition}[CFSM]\label{def:cfsm}
Let $\roles$  and $\mathbb{A}$ be finite subsets of $\roles_\mathfrak{U}$ and $\mathbb{A}_\mathfrak{U}$ respectively.
\begin{enumerate}[i)] 
\item
The set $C_\roles$ of {\em channels} over $\roles$ is defined by\\
\centerline{
$C_\roles=\Set{\ttp\ttq \mid \ttp,\ttq\in \roles, \ttp\neq\ttq}$
}
\item
The set $\mathit{Act}_{\roles,\mathbb{A}}$ of {\em actions}  over $\roles$ and $\mathbb{A}$ is defined by\\
\centerline{
$\textit{Act}_{\roles,\mathbb{A}} = C_\roles\times\Set{!,?}\times\mathbb{A}$
}

\item
A {\em communicating finite-state machine over} $\roles$ \emph{and} $\mathbb{A}$
is a finite transition system given by a tuple\\
\centerline{ $M=(Q,q_0,\mathbb{A},\delta)$ }
where $Q$ is a finite set of states, $q_0\in Q$ is the initial state, and
$\delta\subseteq Q\times\textit{Act}_{\roles,\mathbb{A}}\times Q$ is a set of transitions.
\end{enumerate}
\end{definition}

Notice that the above definition of a CFSM is generic w.r.t.\ the underlying sets
$\roles$ of roles and $\mathbb{A}$ of messages.
This is necessary,  since we shall not deal with a single system of CFSMs but with an arbitrary number of open systems that can be 
{\em composed}.
We shall write $C$ and $\mathit{Act}$ instead of $C_\roles$ and $\mathit{Act}_{\roles,\mathbb{A}}$ when no ambiguity can arise.
 We assume $\elle,\elle',\ldots$ to range over $\textit{Act}$;
$\varphi,\varphi',\ldots$ to range over $\textit{Act}^*$ (the set of finite words over $\textit{Act}$), and
$w,w',\ldots$ to range over $\mathbb{A}^*$ (the set of finite words over $\mathbb{A}$).
$\varepsilon\,(\notin \mathbb{A}\cup\textit{Act})$ denotes the empty word and $\mid v\mid$ the lenght of a word $v\in \textit{Act}^*\cup\mathbb{A}^*$.
The transitions of a CFSM are labelled by actions; a label $\tts\ttr!a$ represents
the asynchronous sending of message $a$ from machine $\tts$ to $\ttr$ through channel $\tts\ttr$ and, dually,
$\tts\ttr?a$ represents the reception (consumption) of $a$ by $\ttr$ from channel
$\tts\ttr$. 

We write $\lang{M}\subseteq\textit{Act}^*$ for
the language over $\textit{Act}$ accepted by the automaton corresponding
to machine $M$, where each state of $M$ is an accepting state. A state
$q\in Q$ with no outgoing transition is final; $q$ is a {\em sending} (resp. {\em receiving})
state if all its outgoing transitions are labelled with sending
(resp. receiving) actions; $q$ is a {\em mixed} state otherwise.

\vspace{2mm}
A CFSM $M = (Q,q_0,\mathbb{A},\delta)$ is:
\begin{enumerate}[a)]
\item
 {\em deterministic} if for all states $q\in Q$ and all actions $\elle$: 
$(q,\elle, q'), (q,\elle,q'')\in \delta$ imply $q'=q''$;
\item
{\em ?-deterministic} (resp. {\em !-deterministic}) if for all states $q\in Q$ and all actions
$(q,\ttr\tts?a, q'), (q,\ttp\ttq?a,q'')\in \delta$ (resp. $(q,\ttr\tts!a, q'), (q,\ttp\ttq!a,q'')\in \delta$) imply $q'=q''$;
\item
{\em ?!-deterministic} if it is both ?-deterministic and !-deterministic.
\end{enumerate}

The notion of ?!-deterministic machine is more demanding than in usual CFSM settings. It will be needed in order to guarantee safety-properties preservation when systems are connected. \\

\begin{definition}[Communicating systems and configurations]
Let $\roles$  and $\mathbb{A}$ be as in Def.~\ref{def:cfsm}.
\begin{enumerate}[i)]
\item
A {\em communicating system (CS)
over} $\roles$ \emph{and} $\mathbb{A}$ is a tuple $S= (M_\ttp)_{\ttp\in\roles}$
where\\
for each $\ttp\in \roles$,
$M_\ttp=(Q_\ttp,q_{0\ttp},\mathbb{A},\delta_\ttp)$ is a CFSM  over $\roles$ and $\mathbb{A}$.
\item
A {\em configuration} of a system $S$ is a pair $s = (\vec{q},\vec{w})$
where \\
-  $\vec{q}= (q_\ttp)_{\ttp\in\roles}$ with $q_\ttp \in Q_\ttp$;\\
-  $\vec{w}  = (w_{\ttp\ttq})_{\ttp\ttq\in C}$ with $w_{\ttp\ttq}\in\mathbb{A^*}$.\\
The component $\vec{q}$ is the control state of the system and $q_\ttp \in Q_\ttp$ is the local
state of machine $M_\ttp$. 
The component $\vec{w}$ represents the state of the channels of the system and $w_{\ttp\ttq} \in \mathbb{A}^*$ is the state of the channel for messages sent from $\ttp$ to $\ttq$. The initial configuration of $S$ is $s_0=  (\vec{q_0},\vec{\varepsilon})$
with $\vec{{q_0}} = (q_{0_\ttp})_{\ttp\in\roles}$.
\end{enumerate}
\end{definition}

\begin{definition}[Reachable configurations]
Let $S$ be a communicating system, and let $s= (\vec{q},\vec{w})$ and $s'= (\vec{q'},\vec{w'})$ 
be two configurations of $S$. 
Configuration $s'$ {\em is reachable from} $s$
{\em by firing  a transition} with action $\elle$, written $s\lts{\elle}s'$, if there is $a\in\mathbb{A}$
such that one of the following conditions holds:
\begin{enumerate}
\item
$\elle = \tts\ttr!a$ and $(q_\tts,\elle,q'_\tts)\in\delta_\tts$ and
\begin{enumerate}[a)]
\item
for all $\ttp\neq\tts: ~ q'_\ttp =  q_\ttp$  and
\item
$w'_{\tts\ttr} =  w_{\tts\ttr}\cdot a$ and for all $\ttp\ttq\neq\tts\ttr: ~ w'_{\ttp\ttq} =  w_{\ttp\ttq}$;
\end{enumerate}
\item 
$\elle = \tts\ttr?a$ and $(q_\ttr,\elle,q'_\ttr)\in\delta_\ttr$ and
\begin{enumerate}[a)]
\item
for all $\ttp\neq\ttr: ~ q'_\ttp =  q_\ttp$  and
\item
$w_{\tts\ttr} =  a\cdot w'_{\tts\ttr}$ and for all $\ttp\ttq\neq\tts\ttr: ~w'_{\ttp\ttq} =  w_{\ttp\ttq}$.
\end{enumerate}
We write $s\lts{}s'$ if there exists $\elle$ such that  $s\lts{\elle}s'$.\\ 
As usual, we denote the reflexive and transitive 
closure of $\lts{}$ by $\lts{}^*$
\item
The set of {\em reachable configurations} of S is $RS(S) = \Set{s \mid s_0\lts{}^* s}.$
\end{enumerate}
\end{definition}
\noindent
According to the last definition, communication happens via buffered channels following the FIFO principle.


\begin{definition}[Safety properties \cite{DY12,CF05}]\hfill\\
\label{def:safeness}
Let $S$ be a communicating system, and let $s= (\vec{q},\vec{w})$ be a configuration of $S$.
\begin{enumerate}[i)]
\item
\label{def:safeness-i}
$s$ is a {\em deadlock configuration} if 
$$\vec{w}=\vec{\varepsilon} 
~~\wedge~~ \forall \ttp\in\roles.~q_\ttp \text{ is a receiving state}
$$
i.e. all buffers are empty, but all machines are waiting for a message.\\
We say that $S$ is {\em deadlock-free} whenever, for any $s\in RS(S)$, $s$ is not a  deadlock configuration.
\item
$s$ is an {\em  orphan-message  configuration} if
$$(\forall \ttp\in\roles. ~ q_\ttp \text{ is final}) ~~\wedge~~ \vec{w}\neq \vec{\varepsilon}$$
i.e. each machine is in a final state, but there is still  at least one non-empty buffer.
\item
\label{def:safeness-ur}
$s$ is an {\em unspecified reception configuration}  if
$$\exists \ttr \in\roles. ~ q_\ttr \text{ is a receiving state } 
~~\wedge~~\forall\tts\in\roles.[~(q_\ttr,\tts\ttr?a,q'_\ttr)\in\delta_\ttr  \implies
(|w_{\tts\ttr}| > 0~~\wedge~~ w_{\tts\ttr}\not\in a\mathbb{A}^*)~ ]$$
i.e. there is a receiving  state $q_\ttr$ 
which is prevented from
receiving any message from any of its buffers.
(In other words, in each channel $\tts\ttr$ from which role $\ttr$ could consume there
is a message which cannot be received by $\ttr$ in state $q_\ttr$.)  
\item
$S$ is {\em safe} if,  for each $s\in RS(S)$, $s$
is neither a deadlock, nor an orphan-message, nor an unspecified reception
configuration.

\end{enumerate}
\end{definition}


The above definitions of safety properties are the same as in~\cite{DY12}.
They follow, for the notions of deadlock and unspecified reception, the definitions in~\cite{CF05}. The deadlock definition in~\cite{TY15} is slightly weaker, but coincides
with~\cite{DY12} if the local CFSMs have no final states. Still weaker definitions of deadlock are used in~\cite{TG18} and in~\cite{BZ83}.

\section{Global Types with Interface Roles}
\label{sec:gtir}

Our aim is the development of a formalism \gtir\ suitable for the composition of open systems which ensures preservation
of safety properties and hence is suitable for modular system construction.
The idea is that our approach should be usable for any global type formalism
\gt which satisfies the following assumptions: For each global type $G$ in \gt,

\begin{enumerate}
\item
there is associated a finite set of roles $\roles(G) \subset \roles_\mathfrak{U}$ and a finite set of actions
 $\mathbb{A}(G) \subset \mathbb{A}_\mathfrak{U}$,
 \item
 there is a {\em projection} function, denoted by $\projecton{\_}{\_}$,
 such that for any $\ttp\in\roles(G)$, $\projecton{G}{\ttp}$ is a CFSM
 over $\roles(G)$ and $\mathbb{A}(G)$.
\end{enumerate}

Global types are considered as syntactic objects while the projection function
yielding a communicating system  $(\projecton{G}{\ttp})_{\ttp\in\roles(G)}$  over $\roles(G)$ and $\mathbb{A}(G)$ is considered 
as the semantics of a global type $G$.
A \gt\ formalism satisfying the above requirements could  be, for instance, the formalism of generalised global types in~\cite{DY12}
or that of global graphs in~\cite{TY15,TG18}.
Under certain conditions on the form of a global type $G$, safety properties are guaranteed
for the system obtained by projecting all roles of $G$ to a CFSM.
For instance,  Theorem 3.1 in~\cite{DY12} states that a communicating system generated from the projections of a well-formed global type is safe in the sense of
Def.~\ref{def:safeness}.
Therefore, if we can assure that safety is preserved by composition, we get a safe system
whenever the underlying global types denote safe (sub)systems.

In order to look at global types as {\em open}, we shall identify some of their roles as {\em interface roles}.
An interface roles represents (part of) the expected communication behaviour of the environment. Interface roles are the basis to compose systems.
In this section, we introduce
Global Types  with Interface Roles (GTIRs)
and provide their syntax and semantics.
First we define the syntactic notion of a pre-GTIR.
Then we present our working example which already points out that
we need a semantic compatibility relation between interfaces for safe composition.
Interface compatibility and a few additional conditions must be respected
to get a proper GTIR. The semantics of a GTIR is then defined
in terms of a system of CFSMs obtained, in the base case, by the projections
of a global type with distinguished interface roles, and, in the composite case,
by composing open CFSM systems by means of suitable ``gateway'' CFSMs. 

\subsection{Pre-GTIRs}

A pre-GTIR is either just
a global type where some roles are declared as interface roles
or it is a syntactic expression composed from two pre-GTIRs by connecting
certain interface roles. The non-connected interface roles remain open.


\begin{definition}[pre-GTIR]
\label{def:pre-gtir}
The set of {\em pre-GTIR} expressions $\GTIR{{\GG}}{\mathbf{I}}$ with set of {\em interface roles} $\mathbf{I}$ is defined by simultaneous induction together with
their sets of roles $\roles(\GTIR{\GG}{\mathbf{I}})$ and
 {\em components}  $\components(\GTIR{\GG}{\mathbf{I}})$:
\begin{enumerate}[i)]
\item $\GTIR{G}{\mathbf{I}}\in \preGTIR$ and
$\roles(\GTIR{\GG}{\mathbf{I}}) = \roles(G)$ and 
$\components (\GTIR{G}{\mathbf{I}})  = \Set{G}$ if
	\begin{enumerate}
	\item $G$ is a global type of \gt,
	\item $\mathbf{I}\subseteq \roles(G)$,
\end{enumerate}
\item $\GTIR{\GTIR{{\GG}_1}{\mathbf{H}}\connect{\HH}{\KK}\GTIR{{\GG}_2}{\mathbf{K}}}{\mathbf{I}} \in \preGTIR$ 
	and
	$\roles(\GTIR{\GTIR{{\GG}_1}{\mathbf{H}}\connect{\HH}{\KK}\GTIR{{\GG}_2}{\mathbf{K}}}{\mathbf{I}}) = 
	\roles({\GTIR{{\GG}_1}{\mathbf{H}}}) \cup  \roles(\GTIR{{\GG}_2}{\mathbf{K}})$
and\\	
	$\components(\GTIR{\GTIR{{\GG}_1}{\mathbf{H}}\connect{\HH}{\KK}\GTIR{{\GG}_2}{\mathbf{K}}}{\mathbf{I}}) = 
	\components({\GTIR{{\GG}_1}{\mathbf{H}}}) \cup  \components(\GTIR{{\GG}_2}{\mathbf{K}})$ if
	\begin{enumerate}
	\item  $\GTIR{{\GG}_1}{\mathbf{H}}, \GTIR{{\GG}_2}{\mathbf{K}}\in\preGTIR$ with
		$\HH\in\mathbf{H}, \; \KK\in\mathbf{K}$,
	\item $\mathbf{I} = (\mathbf{H}\cup\mathbf{K})\setminus\Set{\HH,\KK}$,
	\item
	$\roles(\GTIR{{\GG}_1}{\mathbf{H}}) \cap \roles(\GTIR{{\GG}_2}{\mathbf{K}})
	=\emptyset.$
	\end{enumerate}
\end{enumerate}
\end{definition}

Notice that in a composed pre-GTIR, the notation of the set $\mathbf{I}$ is actually redundant and it is used just to immediately spot the interface roles.
By the above definition, a pre-GTIR is an expression formed by either a global type in \gt\ or a number of global types in \gt\ ``composed''  via symbols of the form $\ \connect{\HH}{\KK}$, and where sets of interface roles are identified by means of superscripts. 
These global types are what  we have defined as the {\em components} of the pre-GTIR.


\subsection{Working example}

We introduce the compatibility relation  we have in mind and the composition operator that we want to use
for constructing GTIRs
 by means of a working example inspired by one in \cite{deAlfaro2005}.
 
Let us assume we wish to develop an open system, let us dub it $S$, which can receive a text message from the outside. 
Once a text is received, the system tries to transmit it at most $n$ times (where also the number $n$ of possible trials is provided from the outside when the system is initialized).
A successful transmission is acknowledged by an {\sf ack} message; a {\sf nack} message
represents instead an unsuccessful transmission.
An {\sf ok} message is sent back in case of a successful transmission; a {\sf fail} message
in case of $n$ unsuccessful trials.
Before any transmission trial, a semantically-invariant transformation is applied to the message. 
The system can hence be used to send messages to social networks which are particularly
strict for what concerns propriety of language. If the message is not accepted by the
social network, our system automatically transforms it maintaining its sense, and sending it again and again up to $n$ times, invariantly transforming it each time. 
A counter is used to keep track of the number of  trials and it is reset to $n$ each time
a message is successfully transmitted. It is instead automatically reset to $n$ each time $0$ is reached, before 
issuing a failure message and restarting the protocol with some new message.\\


If we consider the formalism of generalized global types of \cite{DY12},
the overall behaviour of  the above system $S$ can be described by the
  graph $G$ shown in Fig.~\ref{fig:examplegg}, where the roles (participants) are:\\
{\tt M}: the manager of the system;\\
{\tt T}: the process implementing the semantically-invariant message transformation;\\
{\tt C}: the trials counter;\\
{\tt I}, {\tt J} and {\tt H}: the roles (that we identify as {\em interface roles}) representing those
parts of the environment which, respectively: initializes the system; sends the text message
and receives back the {\sf ok} or {\sf fail} message: receives the messages transmitted 
by the systems and acknowledges its propriety, if so.\\

 \begin{figure}[t]
\hrule

\vspace{4mm}
\begin{center}
      \begin{tikzpicture}
      
      \mkint{}{x9}[][H][ack][M];
      
      \mkint{below=0.5 of x9}{x11}[][M][reset][C];
       
      \mkint{below=0.5 of x11}{x13}[][M][ok][J];
       
       \mkint{right=of x11}{x15}[][C][zero][M];
       
       \mkint{right=of x15}{x16}[][C][notzero][M];
       
       \mkbranch{x15,x16}[x15x16][][-0.7][@][15pt];
       
        \mkint{above =0.18 of x15x16}{x10}[][H][nack][M];
        
        \mkint{below = 0.5 of x15}{x17}[][M][fail][J];
        
        \mkseq{x9}{x11};
        \mkseq{x11}{x13};
        \mkseq{x10}{x15x16};
        \mkseq{x15}{x17};
        
         \mkbranch{x9,x10}[x9x10][][0.8][@][15pt];
         
        \mkint{above = 0.5 of x9x10}{x6}[][M][text][H];
        
        \mkint{above = 0.5 of x6}{x7}[][T][text][M];

        \mkint{above = 0.5 of x7}{x5}[][M][text][T];
        
        \mkseq{x5}{x7};
        
         \mkseq{x7}{x6};
        
        \mkseq{x6}{x9x10};
        
        \mklooptwo[0.5][-97]{x5}{x16}{}{};

        \mkint{above = 1.3 of x5}{x2}[][J][text][M];
        
        \path[line] (x2) -- (entryx5);

        \mkmerge{x13,x17}[x13x17][][0.8][@][15pt];
        
        \mklooptwobelow[0.5][-24]{x2}{x13x17}{}{};
        
        \mkint{above = 1.3 of x2}{x0}[][I][trialsNum][C];
        
        \mkseq{x0}{entryx2};
        
        \node[source,above = 0.5 of x0] (src) {};
        
        \mkseq{src}{x0};
        
      \end{tikzpicture}
      \end{center}
\vspace{4mm}
\hrule
\caption{The global type $G$ of the working example}
\label{fig:examplegg}
\end{figure}
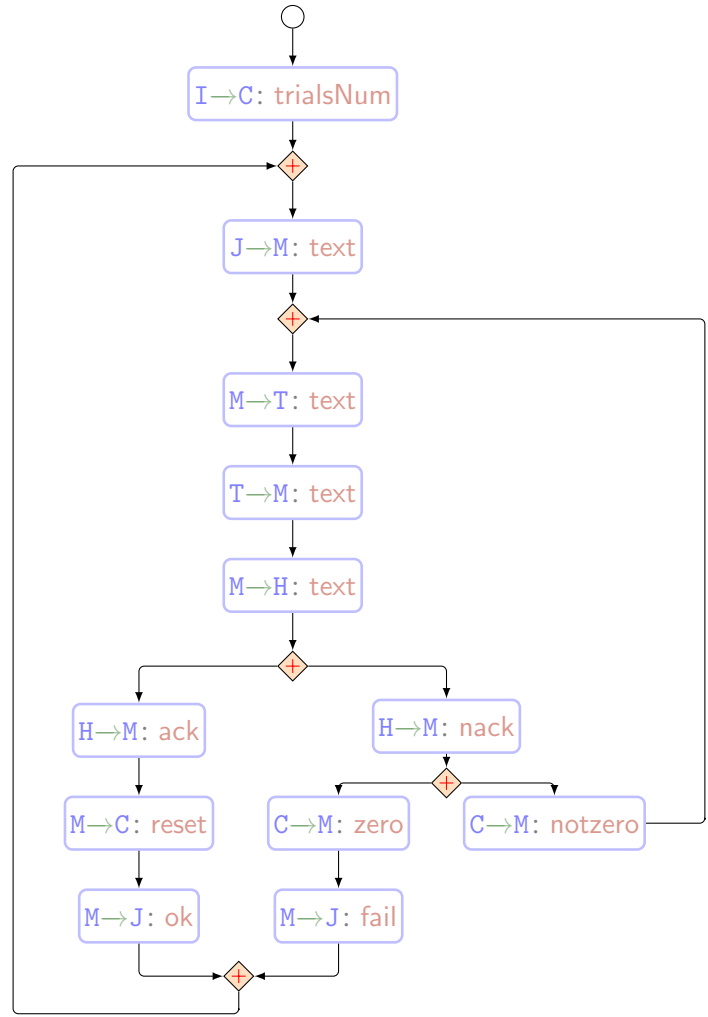

Informally, in the graph $G$ in Fig.~\ref{fig:examplegg}, a label $\tts\rightarrow \ttr : a$ represents an interaction where $\tts$ sends a message
$a$ to r. A vertex with label {\small $\bigcirc$} represents the source of the
graph and {\huge $\diamond$}\hspace{-8.5pt}{$^+$} marks vertexes corresponding
to branch or merge points, or to entry points of loops.
In our formalism, we can look at $G$ as a global type with interface roles by identifying the roles $\II,\JJ$ and $\HH$ as {\em interface roles}.
We do that by writing, according to the syntax of Def.~\ref{def:pre-gtir}(i):
                                                  $$\GTIR{G}{\Set{\II,\JJ,\HH}}$$
Given a global type with interface roles, it is reasonable to expect all its roles to be implemented but the interface ones,
since they are actually used to describe the behaviour of the ``environment'' of the system.
The projections of global types with interface roles onto their interface roles yield CFSMs which are used instead to check whether
two systems can be connected in a safe way. In particular, to check whether interface roles are ``compatible''.
 According to the projection algorithm for generalized global types (see \S 3.1 and Def. 3.4 of \cite{DY12}), the projection on role $\JJ$ of the graph $G$ of Fig.~\ref{fig:examplegg}  is the following  CFSM $M_\JJ = \projecton{G}{\JJ}$ (see (1))
which describes the behaviour of that part of the environment of system $S$ which sends a text and waits for a positive or negative answer.
\vspace{-2mm}
\begin{equation}
\footnotesize
\label{pic:I2}
      \begin{tikzpicture}[->,>=stealth',shorten >=1pt,auto,node distance=1.5cm,semithick]
  
  \node[state]   (one)                        {$1$};
  \node[draw=none,fill=none] (start) [above left = 0.3cm  of one]{{\tt J}};
  \node[state]            (two) [below of=one] {$2$};

   \path  (start) edge node {} (one) 
            (one)  edge                                   node {\edgelabel{JM}!{text}} (two)
           (two) edge       [bend left=100]       node [pos=0.1, left] {\edgelabel{MJ}{?}{ok}} (one)
                   edge        [bend right=100]      node [pos=0.1, right] {\edgelabel{MJ}{?}{fail}} (one);
           
       \end{tikzpicture}
\end{equation}

Going on with our example,
let us consider now another open system $S'$, having, among others, roles \AA, \BB\ and \KK, where \KK\ is one of its interface roles.
In $S'$, the roles {\tt A} and \BB\ keep on sending, in an alternating manner, a text message to \KK, which replies with a positive or negative
acknowledgement ({\sf ok} or {\sf fail}, respectively).
Role \BB\  can send its message only after a successful sending by \AA, and vice versa.
A {\sf fail} message from \KK\ forces the resending of the message.
Let us assume
that the behavior of interface role \KK\ is given by the following CFSM $M_\KK$ (see (2)):

\begin{equation}
\footnotesize
\label{pic:J}
      \begin{tikzpicture}[->,>=stealth',shorten >=1pt,auto,node distance=1.5cm,semithick]
  
  \node[state]   (one)                        {$1$};
    \node[draw=none,fill=none] (start) [above left = 0.3cm  of one]{{\tt K}};
  \node[state]            (two) [below of=one] {$2$};
  \node[state]            (three) [left = 1.8cm of one] {$3$};
  \node[state]            (four) [left = 1.8cm of two] {$4$};

   \path (start) edge node {} (one) 
           (one)  edge                                    node [pos=0.6, above]    {\edgelabel{AK}?{text}} (two)
           (two) edge                                      node [pos=0.0, left]        {\edgelabel{KA}{!}{ok}} (three)
                   edge        [bend right=100]      node [pos=0.1, right]      {\edgelabel{KA}{!}{fail}} (one)
           (three)  edge                                   node [pos=0.6, above]    {\edgelabel{BK}?{text}} (four)
            (four) edge                                     node [pos=0.0, right]        {\edgelabel{KB}{!}{ok}} (one)
                      edge        [bend left=100]      node [pos=0.1, left]      {\edgelabel{KB}{!}{fail}} (three);
           
       \end{tikzpicture}
\end{equation}

The interface roles $\JJ$ and $\KK$ are {\em compatible}, in that
the {\sf text} message asked for by $\KK$ can be the one provided by $\JJ$ to system $S$, whereas the
{\sf ok} and {\sf fail} messages $\JJ$ receives can be the ones that $\KK$ sends to system $S'$.
In a nutshell, if we do not take into account channels in the labels, the language accepted by $\JJ$ is the 
dual (i.e. '!' and '?' are exchanged) of that accepted by $\KK$.

Once the compatibility of $\JJ$ and $\KK$ is ascertained, the behaviours of two {\em gateways} processes could be easily
constructed from $M_\JJ$ and $M_\KK$. 
The idea is to insert an intermediate state with appropriate transitions in the middle of any transition of $M_\JJ$ (and similarly of $M_\KK$) enabling to pass messages from the interface role of one system to the other. For instance, the transition of $M_\JJ$ from state $1$ to state $2$ labelled with $\edgelabel{JM}{!}{text}$ is split into two transitions (see Fig. \ref{eq:JK}  , left), where $\JJ$ first receives a text from $\KK$ and then sends it to $\mathtt{M}$.

Such {\em gateways} processes can be constructed by means of an algorithm that we dub
$\gateway{\cdot}$. It takes two arguments: the CFSM to be transformed and
the name of the interface role of the other system, and returns a ``gateway''  CFSM which 
enables systems to interact.
For what concerns our example, by applying $\gateway{\cdot}$ to $M_\JJ$ and $\KK$
and by applying $\gateway{\cdot}$ to $M_\KK$ and $\JJ$, we get the two CFSMs depicted in Figure \ref{eq:JK}.\\

By assuming $\GTIR{\GG'}{\mathbf{I}\cup\Set{\KK}}$ to be the GTIR denoting the open system $S'$ above,
the pre-GTIR  
                         $$\GTIR{ \GTIR{G}{\Set{\II,\JJ,\HH}}  \connect{\JJ}{\KK}  \GTIR{\GG'}{\mathbf{I}\cup\Set{\KK}}  }{\mathbf{I}\cup\Set{\II,\HH}}$$
 is actually a proper GTIR, since the interface roles ${\JJ}$ and ${\KK}$ are compatible.
 Its semantics is the system obtained by connecting $S$ and $S'$ by means of the gateways $\gateway{M_\JJ,\KK}$ and $\gateway{M_\KK,\JJ}$.
Notice that the GTIR now exposes the remaining interface roles $\mathbf{I}\cup\Set{\II,\HH}$.

\begin{remark}
We could choose to connect systems by implementing a single  ``two-sided'' gateway process, but this would imply to  
change all the names in the channels of $S$ and $S'$ from and to $\JJ$ and $\KK$. This is not feasible
if, as it is likely, $S$ and $S'$ have been separately implemented. Also there would be no straightforward generation of a single ``two-sided'' gateway process.
\end{remark}

\begin{remark}
One could wonder what would change if, instead of using gateways, one simply renamed
    the target of communications to interface nodes.
That is unfeasible (unless a rather strict relation of compatibility were used). 
It is enough to take into account our working example: if we tried to rename the target of 
communications between $M_{\mathtt{M}}$ and $M_\JJ$, machine $M_{\mathtt{M}}$
should be completely rewritten. In fact, whereas both in $S$ and in the new system with gateways
  $M_{\mathtt{M}}$ receives the text from $M_\JJ$, in a new system without gateways
  $M_{\mathtt{M}}$ would receive the text both from $M_{\mathtt{A}}$ and $M_{\mathtt{B}}$.
\end{remark}

 \begin{figure}[t]
\footnotesize
\hrule
\vspace{4mm}
$$
\begin{array}{c@{\hspace{2cm}}c}
      \begin{tikzpicture}[->,>=stealth',shorten >=1pt,auto,node distance=1.5cm,semithick]
 
  \node[state]           (one)                        {$1$};
   \node[draw=none,fill=none] (start) [above left = 0.3cm  of one]{{\tt J}};
  \node[state]            (two) [below of=one] {$\widehat 1$};
  \node[state]           (three) [below of=two] {$2$};
  \node[state]           (four)  [left of=two] {$\widehat 2'$};
  \node[state]           (five) [right of=two] {$\widehat 2''$};

   \path  (start) edge node {} (one) 
            (one)  edge                                   node {\edgelabel{KJ}?{text}} (two)
           (two) edge                                     node {\edgelabel{JM}{!}{text}} (three)
           (four) edge       [bend left]              node [pos=0.1, left] {\edgelabel{JK}{!}{ok}} (one)
           (five) edge        [bend right]            node [pos=0.1,right] {\edgelabel{JK}{!}{fail}} (one)
           (three) edge       [bend left]       node [pos=0.1, left] {\edgelabel{MJ}{?}{ok}} (four)
                     edge        [bend right]      node [pos=0.1, right] {\edgelabel{MJ}{?}{fail}} (five);
    
       \end{tikzpicture}
       &
       \begin{tikzpicture}[->,>=stealth',shorten >=1pt,auto,node distance=1.5cm,semithick]
  
  \node[state]   (one)                        {$1$};
   \node[draw=none,fill=none] (start) [above left = 0.3cm  of one]{{\tt K}};
  \node[state]            (two) [below of=one] {$\widehat{1}$};
   \node[state]            (three) [below of=two] {$2$};
    \node[state]            (four) [left of=two] {$\widehat 2'$};
     \node[state]            (five) [right of=two] {$\widehat 2''$};
  \node[state]            (six) [left = 3.5cm of one] {$3$};
  \node[state]            (seven) [left = 3.5cm of two] {$\widehat 3$};
      \node[state]            (eight) [below of=seven] {$4$};
  \node[state]            (nine) [right = 0.8cm of seven] {$\widehat 4''$};
  \node[state]            (ten) [left = 0.8cm of seven] {$\widehat 4'$};

   \path  (start) edge node {} (one) 
            (one)  edge                                    node [pos=0.6, above]    {\edgelabel{AK}?{text}} (two)
           (two) edge                                      node [pos=0.6, above]        {\edgelabel{KJ}{!}{text}} (three)
           (three)  edge        [bend right]      node [pos=0.1, right]      {\edgelabel{JK}{?}{fail}} (five)
                         edge      [bend left]        node [pos=0.1, left]      {\edgelabel{JK}{?}{ok}} (four)
           (five)  edge          [bend right]          node [pos=0.3, right]    {\edgelabel{KA}!{fail}} (one)
           (six)  edge                                    node [pos=0.6, above]    {\edgelabel{BK}?{text}} (seven)
           (seven) edge                                      node [pos=0.6, above]        {\edgelabel{KJ}{!}{text}} (eight)
           (eight)  edge        [bend left]      node [pos=0.1, left]      {\edgelabel{JK}{?}{fail}} (ten)
                         edge      [bend right]        node [pos=0.1, right]      {\edgelabel{JK}{?}{ok}} (nine)
           (ten)  edge          [bend left]          node [pos=0.3, left]    {\edgelabel{KB}!{fail}} (six)
           (nine)  edge          [bend left]          node [pos=0.1, left]    {\edgelabel{KB}!{ok}} (one)
           (four)  edge          [bend right]          node [pos=0.1, right]    {\edgelabel{KA}!{ok}} (six);
           
       \end{tikzpicture}
\end{array}
$$
\vspace{4mm}
\hrule
\caption{ $\gateway{M_\JJ,\KK}$ and $\gateway{M_\KK,\JJ}$ }
\label{eq:JK}
\end{figure}

\subsection{Interface compatibility and GTIRs}

The following definitions are preliminary to the formal definition of
interface compatibility and hence of a proper GTIR. 
First, we extend the projection function of the global type formalism \gt\ to pre-GTIRs.
The projection of interface roles to CFSMs can then be used below  to check interface compatibility.


\begin{definition}
Let $\GTIR{{\GG}}{\mathbf{I}}$ be a pre-GTIR and let ${\tt p}\in  \roles(\GTIR{{\GG}}{\mathbf{I}}) $. 
We define
$$\projecton{\GTIR{{\GG}}{\mathbf{I}}}{{\tt p}} =
                  \projecton{G}{{\tt p}} \hspace{3mm}\text{ where } G\in\components(\GTIR{{\GG}}{\mathbf{I}}) \text{ such that } {\tt p}\in \roles({G}).$$

\end{definition}
                                                                 

At next we want to consider the dual of the language accepted by a CFSM
when input and output are reversed and the names of communication
channels are forgotten. For that purpose we need the following definition:

\begin{definition}
\begin{enumerate}[i)]
\item
Let $\varphi\in\Act^*$, we define $\varphi^\mC\in(\Set{!,?}\times\mathbb{A})^*$ inductively by:\\
\centerline{
$\varepsilon^\mC = \varepsilon$ \hspace{8mm}$(\ttp\ttq ?a\cdot \varphi)^\mC= ?a\cdot \varphi^\mC$  \hspace{8mm} $(\ttp\ttq !a\cdot \varphi)^\mC= !a\cdot \varphi^\mC$.
}
Moreover, for $A\subseteq\Act^*$, $A^\mC = \Set{\varphi^\mC \mid \varphi\in A}$.
\item
We define $\Dual{(\cdot)} : (\Set{!,?}\times\mathbb{A}) \rightarrow (\Set{!,?}\times\mathbb{A})$ by:\hspace{4mm}
$\Dual{!a} = ? a$ \hspace{8mm} $\Dual{?a} = !a$.\\
$\Dual{(\cdot)}$ is then straightforwardly extended also to words and finite sets of words.
\end{enumerate}
\end{definition}                                                               
 
 Finally, we define interface compatibility by requiring that the CFSMs
 of two interface roles are dual to each other. 
 Additionally we require the absence of mixed states as well as input and output
determinism for each of the two CFSMs. In fact, if either of these two conditions were omitted
one can provide counterexamples showing that our results on preservation of safety properties
would generally no longer be valid.

                                                                 
\begin{definition}[Interface compatibility]\label{interface-comp}
\hfill

\begin{enumerate}[i)]
\item
\label{interface-comp-i}
Let $M$  and $M'$ be two CFSMs over $\roles$ and $\mathbb{A}$
($\roles'$ and $\mathbb{A}'$ resp.). $M$  and $M'$ are {\em compatible}, denoted by $M\interfacecomp M'$, whenever
\begin{enumerate}[1)]
\item
$\lang{M}^\mC = \Dual{\lang{M'}^\mC}.$
\item 
\label{interface-comp-ia}
$M$ and $M'$ do not contain mixed states.
\item 
$M$ and $M'$ are  ?!-deterministic.
\end{enumerate}
\item
\label{interface-comp-ib}
Let $\GTIR{\GG_1}{\mathbf{H}}$ and $\GTIR{\GG_2}{\mathbf{K}}$
be two pre-GTIRs. 
Two interface roles $\HH \in \mathbf{H}$ and $\KK \in \mathbf{K}$ are \emph{interface compatible},
denoted by ${\HH}\interfacecomp {\KK}$, if\,
$\projecton{\GTIR{\GG_1}{\mathbf{H}}}{{\HH}}\interfacecomp\projecton{\GTIR{\GG_2}{\mathbf{K}}}{{\KK}}$.

\end{enumerate}
\end{definition}

\noindent
It is easy to check that in our working example we have $\JJ \interfacecomp \HH$, since the CFSMs (\ref{pic:I2}) and (\ref{pic:J})  are compatible, i.e. one accepts the dual language of the other if channel names are not taken into account;
they have no mixed states and are  ?!-deterministic.

We are now ready to introduce our notion of GTIR.
The syntactic construction of proper GTIRs follows the construction of pre-GTIRs but imposes two semantic conditions for the underlying CFSMs formulated in  i) and ii) of the next definition.


 \begin{definition}[GTIR]\hfill
\label{def:gtir}
\begin{enumerate}[i)]
\item
A pre-GTIR $\GTIR{G}{\mathbf{I}}$ formed by a global type $G$ and interface roles
$\mathbf{I}\subseteq \roles(G)$ is a GTIR if
no communication between interface roles (i.e. roles in $\mathbf{I}$) is present, 
i.e.\ for each $p \in \roles(G)$ the projection
$\projecton{G}{\ttp}$ has no transition with a label of the form
$\II\JJ!a$ or $\II\JJ?a$ with $\II,\JJ\in\mathbf{I}$.
\item
A pre-GTIR $\GTIR{\GTIR{{\GG}_1}{\mathbf{H}}\connect{\HH}{\KK}\GTIR{{\GG}_2}{\mathbf{K}}}{\mathbf{I}}$
obtained by the composition of two GTIRs $\GTIR{\GG_1}{\mathbf{H}}$
and $\GTIR{\GG_2}{\mathbf{K}}$ via interface roles $\HH\in\mathbf{H}, \; \KK\in\mathbf{K}$ is a GTIR if\,  $\HH$ and $\KK$ are interface compatible, i.e.\ ${\HH}\interfacecomp {\KK}$.
\end{enumerate}
\end{definition}                                                        


%

\subsection{Semantics of GTIRs}

To provide semantics for GTIRs we first define the {\em gateway} transformation $\gateway{\cdot}$ previously mentioned.
By means of such a function it is possible to construct the gateway processes enabling the CFSM systems described by
GTIRs to be connected.

The $\gateway{\cdot}$ 
function takes as input a CFSM $M_\HH$ of some role $\HH$ and a role name $\KK$. In our application $\HH$ and $\KK$ will be interface roles where $\HH$ is the interface role to be connected to the interface role  $\KK$. The gateway function transforms $M_\HH$ by ``inserting''  a new state ``in between'' any 
transition. In such a way a transition from $q$ to $q'$ receiving a message $a$ from a role $\tts (\neq \KK)$ is transformed
into two transitions: one from $q$ to the new state $\widehat{q}$ receiving $a$ from $\tts$, and one from
$\widehat{q}$ to $q'$ sending $a$ to $\KK$. 
Conversely, a transition from $q$ to $q'$ sending a message $a$ to a role $\tts  (\neq \KK)$ is transformed
into two transitions: one from $q$ to the new state $\widehat{q}$ receiving $a$ from $\KK$, and one from
$\widehat{q}$ to $q'$ sending $a$ to $\tts$. We distinguish the new ``inserted'' states by superscripting them
by the transition they are ``inserted in between''.

\begin{definition}[The $\gateway{\cdot}$ transformation]\hfill\\
\label{def:gateway}
Let $M_\HH=(Q,q_0,\mathbb{A},\delta)$ be the CFSM of a role $\HH$, and let  $\KK$ be a role name.
We define 
$$\gateway{M_\HH,\KK}=(Q',q_0,\mathbb{A},\delta')$$
where\\
-  
$Q' = Q \cup \widehat{Q_\delta}$, with 
$\widehat{Q_\delta} =\bigcup_{q\in Q}\Set{q^{(q,\elle,q')} \mid (q,\elle,q')\in\delta}$, and\\
- 
$\delta' = \hspace{12pt}\Set{(q,\KK\HH?a,q^{(q,\HH\tts!a,q')}), (q^{(q,\HH\tts!a,q')},\HH\tts!a,q') \mid  (q,\HH\tts!a,q')\in\delta}$\\
\hphantom{$\bullet a \delta' =$}$\cup\  
\Set{(q,\tts\HH?a,q^{(q,\tts\HH?a,q')}), (q^{(q,\tts\HH?a,q')},\HH\KK!a,q') \mid  (q,\tts\HH?a,q')\in\delta}$
\end{definition}
\medskip
\noindent
For the sake of readability, we shall often denote elements $q^{(q,\elle,q')}, q^{(q,\elle',q'')},q^{(q,\elle'',q''')}, \ldots$
of  $\widehat{Q_\delta}$ \linebreak
by $\widehat{q}, \widehat{q'}, \widehat{q''},\ldots$.
We shall also refer to $\widehat{Q_\delta}$ simply as $\widehat{Q}$ when clear from the context.

We can now define the composition of two communicating systems $S_1$ and $S_2$ w.r.t.\ compatible interface roles $\HH$ and $\KK$.
We take the union of the CFSMs of $S_1$ and $S_2$ \emph{but} replace the CFSMs
$M_\HH$ and $M_\KK$ of the interface roles
$\HH$ and $\KK$ by their \emph{gateway CFSMs} $\gateway{M_\HH,\KK}$ and $\gateway{M_\KK,\HH}$.

\begin{definition}[Composition of communicating systems]\hfill\\
\label{def.semconn}
Let $S_1 = (M^1_\ttp)_{\ttp\in\roles_1}$ and $S_2 = (M^2_\ttq)_{\ttq\in\roles_2}$ be two communicating systems
over $\roles_1$ and $\mathbb{A}_1$ ($\roles_2$ and $\mathbb{A}_2$ resp.)
such that $\roles_1\cap\roles_2 = \emptyset$.
 Moreover, let $\HH\in\roles_1$ and $\KK\in\roles_2$  be such that
 $\HH \interfacecomp \KK$ (i.e.\ $M^1_\HH \interfacecomp M^2_\KK$). \\ 
The \emph{composition of $S_1$ and $S_2$ w.r.t. $\HH$ and $\KK$} is the communicating system
                        $${S_{1}}\connect{\HH}{\KK} {S_{2}} = (M_\ttp)_{\ttp\in(\roles_1\cup\roles_2)}$$
over $\roles_1\cup\roles_2$ and $\mathbb{A}_1 \cup \mathbb{A}_2$
where $M_\HH = \gateway{M^1_\HH, \KK}$, $M_\KK = \gateway{M^2_\KK, \HH}$, 
$M_\ttp =M^1_\ttp$ for all $\ttp\in\roles_1$ and
$M_\ttp =M^2_\ttp$ for all $\ttp\in\roles_2$.\footnote{The CFSMs over $\roles_1$ and $\mathbb{A}_1$
 ($\roles_2$ and $\mathbb{A}_2$ resp.) are considered here as CFSMs over $\roles_1\cup\roles_2$
 and $\mathbb{A}_1 \cup \mathbb{A}_2$.}
\end{definition}

The semantics of a GTIR is inductively defined following its syntactic construction. 
In the base case, its semantics is the communicating system obtained by the projections
of the underlying global graph. The semantics of a composite GTIR is the composition
of the CFSM systems denoted by its constituent parts.

\begin{definition}[GTIR semantics]\hfill\\
\label{def.semgtir}
The communicating system $\Sem{\GTIR{{\GG}}{\mathbf{I}}}$ denoted by a GTIR $\GTIR{{\GG}}{\mathbf{I}}$ 
is inductively defined as follows:
\begin{itemize}
\item[-]
$\Sem{\GTIR{{G}}{\mathbf{I}}} =  (\projecton{G}{\ttp})_{\ttp\in\roles(G)}$   where $G$ is a global type in \gt;

\item[-]
$\Sem{\GTIR{\GTIR{{\GG}_1}{\mathbf{H}}\connect{\HH}{\KK}\GTIR{{\GG}_2}{\mathbf{K}}}{\mathbf{I}}}
= \Sem{   \GTIR{{\GG}_1}{\mathbf{H}}  } \connect{\HH}{\KK} \Sem{\GTIR{{\GG}_2}{\mathbf{K}}}.$
\end{itemize}
\end{definition}
\noindent

It is immediate to check that the operation of  ``connecting'' GTIRs is semantically commutative and associative,
i.e. the following holds:\\
({\em comm}) \hspace{4pt}
$ \Sem{\GTIR{\GTIR{{\GG}_1}{\mathbf{H}}\connect{\HH}{\KK}\GTIR{{\GG}_2}{\mathbf{K}}}{\mathbf{I}}}
 = \Sem{\GTIR{\GTIR{{\GG}_2}{\mathbf{K}}\connect{\KK}{\HH}\GTIR{{\GG}_1}{\mathbf{H}}}{\mathbf{I}}} $ \\
({\em ass})\hspace{19pt}
 $\Sem{
  \GTIR{\GTIR{\GTIR{{\GG}_1}{\mathbf{H}}\connect{\HH}{\KK}\GTIR{{\GG}_2}{\mathbf{K}}}{\mathbf{I}}
           \connect{\II}{\JJ}
            \GTIR{{\GG}_3}{\mathbf{J}} }
           {\mathbf{I'}}
 }
 =
 \Sem{\GTIR{
                    \GTIR{{\GG}_1}{\mathbf{H}}\connect{\HH}{\KK} 
                                     \GTIR{
                                               \GTIR{{\GG}_2}{\mathbf{K}}  \connect{\II}{\JJ} \GTIR{{\GG}_3}{\mathbf{J}}
                                               }{\mathbf{J'}}
                               }{\mathbf{I'}}
 }  
 $\\
 \hphantom{({\em ass})\hspace{19pt}} where {\scriptsize ${\mathbf{J'}} = \mathbf{K}\cup\mathbf{J}\setminus\Set{\II,\JJ}$}\\

%
%
%


\section{Preservation of Safety-Properties}
\label{sect:safetypreservation}

In the present section we show that if we take two safe communicating systems
$S_1$ and $S_2$ such that $S_1$ possesses a CFSM $M^1_\HH$ and $S_2$ a CFSM $M^2_\KK$ which is compatible with $M^1_\HH$, replace both CFSMs by their gateway transformations and then join the resulting systems, we get a   a safe system.

\vspace{2mm}
\textbf{General assumption:} In the following of this section we generally assume given a
system  \linebreak 
{$S={S_{1}}\connect{\HH}{\KK} {S_{2}}$}  composed
as described in Def.~\ref{def.semconn}
from systems $S_1$ and $S_2$ with compatible CFSMs $M^1_\HH$ and $M^2_\KK$. 

\vspace{2mm}
\textbf{Notation:} 
The channels of $S$ are $C=\Set{\ttp\ttq \mid \ttp,\ttq\in \roles, \ttp\neq\ttq}$
and the channels of $S_i$ are \linebreak $C_i=\Set{\ttp\ttq \mid \ttp,\ttq\in \roles_i, \ttp\neq\ttq}$ for $i=1,2.$
If $s= (\vec{q},\vec{w})$ is a configuration of $S$, where $\vec{q}=(q_\ttp)_{\ttp\in\roles}$
and $\vec{w} = (w_{\ttp\ttq})_{\ttp\ttq\in C}$,
we write $\restrict{s}{i}$ for $(\restrict{\vec{q}}{i},\restrict{\vec{w}}{i})$ 
where $\restrict{\vec{q}}{i} = (q_\ttp)_{\ttp\in\roles_i}$ and 
$\restrict{\vec{w}}{i} =  (w_{\ttp\ttq})_{\ttp\ttq\in C_i}$ ($i = 1,2$).
Notice that $\restrict{s}{i}$ is not necessarily a configuration of $S_i$, because of possible states in $\widehat{Q}$,
which are the additional states of the gateways.\\

%

The following technical properties easily descend from the definition of $\gateway{\cdot}$.
In particular from the fact that the gateway transformation of a machine $M$ does insert an intermediate state
 between any pair of states of $M$ connected by a transition. By definition, the intermediate state
 possesses exactly one incoming transition and one outgoing transition. 

\begin{fact}
\label{fact:uniquesending}
Let $s= (\vec{q},\vec{w}) \in RS(S)$ be a reachable configuration of
$S ={S_{1}}\connect{\HH}{\KK} {S_{2}}$.
\begin{enumerate}
\item
\label{fact:uniquesending-i}
If ${q_\HH}= \widehat{q} \in\widehat{Q_\HH}$ then
${q_\HH}$ is not final and
 there exists a unique transition $({q_\HH},\_,\_)\in\delta_\HH$.\\
  Moreover such a transition is of the form
 $(q_\HH,\HH\tts!a,q')$ with $q'\not\in\widehat{Q_\HH}$.\\
Similarly for $\KK$.

\item
\label{fact:uniquesending-ii}
If ${q_\HH}\not\in\widehat{Q_\HH}$ then either $q_\HH$ is final, or any transition $({q_\HH},\_,\_)\in\delta_\HH$
is an input  one, that 
is of the form $({q_\HH},\tts\HH?a,\widehat{q'_\HH})$ with $\widehat{q'_\HH}\in\widehat{Q_\HH}$. Similarly for $\KK$.
\item
\label{fact:uniquesending-iii}
If ${q_\HH}\not\in\widehat{Q_\HH}$ then
             \begin{enumerate}[a)]
\item
If $(q_\HH,\KK\HH?a,\widehat{q'_\HH})\in\delta_\HH$  then there exists $(\widehat{q'_\HH},\HH\tts!a,q''_\HH)\in\delta_\HH$ with $\tts \neq \KK$ 
such that $(q_\HH,\HH\tts!a,q''_\HH)\in\delta^1_\HH$.
The same holds for $\delta^2_\KK$ by exchanging $\HH$ with $\KK$ and vice versa.
\item
If $(q_\HH,\tts\HH?a,\widehat{q'_\HH})\in\delta_\HH$ with $\tts\neq\KK$  then there exists   $(\widehat{q'_\HH},\HH\KK!a,q''_\HH)\in\delta_\HH$  
such that $(q_\HH,\tts\HH?a,q''_\HH)\in\delta^1_\HH$.
The same holds for $\delta^2_\KK$ by exchanging $\HH$ with $\KK$ and vice versa.
              \end{enumerate}
\end{enumerate}
\end{fact}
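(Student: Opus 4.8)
The plan is to establish all three items by direct inspection of Definition~\ref{def:gateway}, recalling that $M_\HH=\gateway{M^1_\HH,\KK}$ and $M_\KK=\gateway{M^2_\KK,\HH}$, so that every transition in $\delta_\HH$ is generated by exactly one of the two clauses defining $\delta'$ there. The reachability hypothesis $s\in RS(S)$ is not actually needed for these facts: items \ref{fact:uniquesending-i}--\ref{fact:uniquesending-iii} are purely structural statements about the gateway machines, and it is stated only because the ambient configurations in the later safety arguments are reachable. The one preliminary observation I would record is that $\roles_1\cap\roles_2=\emptyset$, whence $\KK\notin\roles_1$; therefore every original transition of $M^1_\HH$ is a send $\HH\tts!a$ or a receive $\tts\HH?a$ with $\tts\in\roles_1$, so in particular $\tts\neq\KK$. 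This is precisely what later lets me read off, from the \emph{channel} of an input transition of the gateway, which clause of the definition produced it.

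For item \ref{fact:uniquesending-i} I would use that each state of $\widehat{Q_\HH}$ has the form $\widehat q=q^{(q,\elle,q')}$ for a \emph{unique} original transition $(q,\elle,q')\in\delta^1_\HH$, since the superscript records that transition verbatim. Reading off $\delta'$, the state $\widehat q$ then carries exactly one outgoing transition: $(\widehat q,\HH\tts!a,q')$ when $\elle=\HH\tts!a$, and $(\widehat q,\HH\KK!a,q')$ when $\elle=\tts\HH?a$. In either case this is a sending transition whose target $q'\in Q$ lies outside $\widehat{Q_\HH}$, so $\widehat q$ is non-final with a unique, sending outgoing transition; the claim for $\KK$ is identical with $M^2_\KK$ in place of $M^1_\HH$.

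For item \ref{fact:uniquesending-ii} I would again read off $\delta'$: the outgoing transitions of an original state $q\in Q$ are exactly $(q,\KK\HH?a,\widehat q)$ for each original send $(q,\HH\tts!a,q')\in\delta^1_\HH$ and $(q,\tts\HH?a,\widehat q)$ for each original receive $(q,\tts\HH?a,q')\in\delta^1_\HH$. Each of these is an input of $\HH$ leading into $\widehat{Q_\HH}$, so $q$ is either final or all of its gateway transitions are inputs of the stated shape. Item \ref{fact:uniquesending-iii} is the converse reading of the same two clauses: an input $\KK\HH?a$ out of $q$ can only arise from the first clause, because by the preliminary remark the second clause produces inputs $\tts\HH?a$ with $\tts\neq\KK$; it therefore pairs with a send $(\widehat{q'_\HH},\HH\tts!a,q''_\HH)$ (with $\tts\neq\KK$, the original target lying in $\roles_1$) and witnesses $(q,\HH\tts!a,q''_\HH)\in\delta^1_\HH$. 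Symmetrically, an input $\tts\HH?a$ with $\tts\neq\KK$ can only arise from the second clause, pairing with $(\widehat{q'_\HH},\HH\KK!a,q''_\HH)$ and witnessing $(q,\tts\HH?a,q''_\HH)\in\delta^1_\HH$; the $\KK$-versions follow by exchanging $\HH$ and $\KK$ and $S_1,S_2$.

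I do not anticipate a real obstacle, since the entire argument is a case split on the two defining clauses of Definition~\ref{def:gateway}. The only point demanding care is tracking channel directions together with the disjointness $\KK\notin\roles_1$, which is exactly the discriminator used in the converse direction of item \ref{fact:uniquesending-iii}; everything else is bookkeeping on the superscripted intermediate states.
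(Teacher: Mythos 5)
Your proposal is correct and matches the paper's intent exactly: the paper gives no detailed proof of this Fact, merely remarking that it ``easily descends'' from Definition~\ref{def:gateway} because each inserted intermediate state has exactly one incoming and one outgoing transition, and your case split on the two clauses of $\delta'$ is precisely that argument spelled out. Your preliminary observation that $\roles_1\cap\roles_2=\emptyset$ forces $\tts\neq\KK$ in every original transition of $M^1_\HH$ --- the discriminator needed for item~\ref{fact:uniquesending-iii} --- and your note that reachability of $s$ is never used are both accurate and welcome refinements of what the paper leaves implicit.
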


If a reachable configuration of the connected system $S={S_{1}}\connect{\HH}{\KK} {S_{2}}$ does not involve an intermediate state of the gateway
$M_\HH = \gateway{M^1_\HH, \KK}$, 
then by taking into account only the states of machines of $S_1$ and disregarding
the channels between the gateways, 
we get a
reachable configuration of $S_1$. Similarly for $S_2$.

\begin{lemma}\hfill\\
\label{lem:nohatrestrict}
Let $s= (\vec{q},\vec{w}) \in RS(S)$ be a reachable configuration of $S ={S_{1}}\connect{\HH}{\KK} {S_{2}}$.
\begin{enumerate}[i)]
\item
\label{lem:nohatrestrict-i}
$q_\HH\not\in\widehat{Q_\HH} \implies 
\restrict{s}{1}\in RS(S_1)$;
\item
\label{lem:nohatrestrict-ii}
$q_\KK\not\in\widehat{Q_\KK} \implies 
\restrict{s}{2}\in RS(S_2)$.
\end{enumerate}
\end{lemma}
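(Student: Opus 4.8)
The plan is to prove clause~(\ref{lem:nohatrestrict-i}); clause~(\ref{lem:nohatrestrict-ii}) then follows verbatim by the symmetry of the composition (the commutativity property recorded after Definition~\ref{def.semgtir}), exchanging the roles of $S_1,S_2$ and of $\HH,\KK$. For clause~(\ref{lem:nohatrestrict-i}) I would argue by induction on the length of a transition sequence $s_0\lts{}^*s$ witnessing $s\in RS(S)$, simulating that run inside $S_1$. The difficulty is intrinsic to the gateway: by Definition~\ref{def:gateway} the machine $\gateway{M^1_\HH,\KK}$ replaces each original transition $(q,\elle^*,q')$ of $M^1_\HH$ by a receive-then-send pair that passes through an intermediate state of $\widehat{Q_\HH}$, so one step of $S_1$ corresponds to two steps of $S$, and while $\HH$ sits in such a hat state, $\restrict{s}{1}$ is not literally a configuration of $S_1$.

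To cope with this I would strengthen the induction hypothesis so that it also speaks about configurations with $q_\HH\in\widehat{Q_\HH}$. For a hat state $q_\HH=\widehat{q}$ arising from an original transition $(q,\elle^*,q')$ of $M^1_\HH$, define the \emph{completion} $\overline{\restrict{s}{1}}$ of $\restrict{s}{1}$ to be the $S_1$-configuration obtained by setting the state of $\HH$ to $q'$ and, if $\elle^*=\HH\tts!a$ is an output (so that only the receive-half from $\KK$ has fired, on the cross-channel $\KK\HH\notin C_1$), appending $a$ to the channel $\HH\tts$; if $\elle^*=\tts\HH?a$ is an input, its consume-half on $\tts\HH\in C_1$ has already occurred and no channel correction is needed, only the state move to $q'$. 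The strengthened claim reads: if $q_\HH\notin\widehat{Q_\HH}$ then $\restrict{s}{1}\in RS(S_1)$, and if $q_\HH\in\widehat{Q_\HH}$ then $\overline{\restrict{s}{1}}\in RS(S_1)$.

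In the inductive step I would do a case analysis on the last fired action $\elle$ of the predecessor $s'$, using Fact~\ref{fact:uniquesending} to pin down the possibilities. If $\elle$ does not involve $\HH$ and lies entirely within $S_1$, it is an $S_1$-step on a machine other than $\HH$ which I fire from the (possibly completed) restriction of $s'$; a small commutation observation is needed, namely that this step stays enabled under the completion, since the completion only touches $\HH$ and appends to the \emph{back} of a channel whereas receptions read from the \emph{front}. If $\elle$ lies within $S_2$ or acts on a cross-channel on the $\KK$-side, it leaves $\restrict{s}{1}$ and its completion unchanged, so the hypothesis transfers. The crucial cases are those where $\elle$ involves $\HH$: by Fact~\ref{fact:uniquesending} every gateway receive enters a hat state and every gateway send leaves one, so either $\elle$ is the receive-half entering $\widehat{q}$ — then $s'$ has $\HH$ in the original source state $q$, the first clause of the hypothesis applies to $\restrict{s'}{1}$, and firing the single original transition $(q,\elle^*,q')$ of $M^1_\HH$ from $\restrict{s'}{1}$ yields exactly $\overline{\restrict{s}{1}}$ — or $\elle$ is the send-half leaving $\widehat{q}$, in which case $s'$ has $\HH$ in the hat state, the second clause applies, and one checks that $\overline{\restrict{s'}{1}}$ already equals $\restrict{s}{1}$, since the send-half only moves $\HH$ to $q'$ and updates the very channel already accounted for by the completion.

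The main obstacle is precisely this bookkeeping around the hat states: choosing the completion so that the two gateway half-steps collapse onto the single original $M^1_\HH$ transition, and verifying that steps of the other machines interleaved while $\HH$ rests in a hat state commute correctly with the completion (appends to a channel tail versus reads from its head). Once the completion is fixed and Fact~\ref{fact:uniquesending} is used to decide, for each $\HH$-step, whether it is an entering receive or a leaving send, every case reduces to either ``fire the matching $S_1$ transition'' or ``the restriction is unchanged''. The base case $s=s_0$ is immediate, because the gateways retain the original initial state $q_0$ and all buffers are empty, so $\restrict{s_0}{1}$ is the initial configuration of $S_1$.
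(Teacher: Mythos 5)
Your proof is correct, but it takes a genuinely different route from the paper's. The paper's sketch transforms the witnessing run of $S$ itself: it locates the first receive-half that drives $\HH$ into a hat state, finds the matching send-half later in the run (which exists and is unique by Fact~\ref{fact:uniquesending}(\ref{fact:uniquesending-i}), since $\HH$ idles in the hat state in between), and permutes that send-half backwards so the two halves become adjacent; iterating, every $\ttr\HH?a$ is immediately followed by its $\HH\tts!a$, after which each adjacent pair collapses under $\restrict{\cdot}{1}$ to the single original $M^1_\HH$ transition, while every other step either projects to an $S_1$-step or leaves the restriction unchanged. You instead leave the run untouched and strengthen the induction hypothesis with the completion $\overline{\restrict{s}{1}}$, which eagerly executes the pending half-step recorded in the hat-state annotation $q^{(q,\elle^*,q')}$ inside the simulated $S_1$-run. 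The two approaches push the same difficulty into different places: the paper must justify that the postponed send-half commutes backwards past the intervening transitions (safe, since moving an asynchronous send earlier only makes its message available sooner and cannot disable later receptions, but this is asserted rather than argued in the sketch), whereas you must verify that interleaved steps of other machines commute with the completion, which you correctly reduce to the observation that the completion appends to the \emph{tail} of $\HH\tts$ while receptions read from the \emph{head} -- and enabledness in the completed configuration follows from enabledness in $S$, so no new steps ever need to be justified. Your case analysis is sound in all branches (the receive-half fires the original $M^1_\HH$ transition from $\restrict{s'}{1}$ and lands exactly on $\overline{\restrict{s}{1}}$; the send-half is absorbed, since $\overline{\restrict{s'}{1}}=\restrict{s}{1}$ in both the output and input variants, the cross-channels $\HH\KK$, $\KK\HH$ lying outside $C_1$), and the symmetry appeal for clause~(\ref{lem:nohatrestrict-ii}) matches the paper's own treatment. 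On balance your argument is more local and self-contained -- each inductive case is checked against a fixed invariant rather than against a globally rearranged run -- at the cost of the completion bookkeeping; the paper's reordering yields a cleaner final picture (a run in which gateway pairs are atomic) whose pairing structure it then reuses in spirit in the proofs of Lemmas~\ref{lem:alltrans} and~\ref{lem:wempty}.
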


\begin{proof}\,
[{\it Sketch}]
(\ref{lem:nohatrestrict-i})
If $s \in RS(S)$, then there exists $s_0\lts{}s_1\lts{} \ldots\lts{} s_{n-1}\lts{}s_n=s$. Let $s_i = (\vec{q_i},\vec{w_i})$ $(i=0,..n)$.
Let $j \geq 0$ be the smallest index such that ${q_j}_\HH\not\in \widehat{Q_\HH}$ and  ${q_{j+1}}_\HH\in \widehat{Q_\HH}$
(if there is not such a $j$, then the thesis follows immediately).
By definition of $\gateway{\cdot}$ we have that $s_j\lts{\ttr\HH?a} s_{j+1}$ for a certain $\ttr$.
Now let $t\geq j+1$ be the smallest index such that ${q_t}_\HH = {q_{j+1}}_\HH$ and ${q_{t+1}}_\HH\not\in \widehat{Q_\HH}$.
Such an index $t$ does exist because of the hypothesis $q_\HH\not\in\widehat{Q_\HH}$
(moreover, notice that no-self loop transitions are possible out of a state in $\widehat{Q_\HH}$).
By definition of $\gateway{\cdot}$ we have that $s_{t}\lts{\HH\tts!a} s_{t+1}$ for a certain $\tts$.
It is now possible to build a configuration-transitions sequence like the following one\\
\centerline{
$s_0\lts{}s_1\lts{}\ldots s_j\lts{\ttr\HH?a} s_{j+1}\lts{\HH\tts!a} s'_{j+2} \lts{} \ldots \lts{}s'_{n-1}\lts{}s_n=s$
}
where ${q_{j+2}}_\HH\not\in \widehat{Q_\HH}$.\\
By iterating this procedure, we can get a sequence
\\
\centerline{
$s_0\lts{}s_1\lts{}\ldots s_j\lts{} s_{j+1}\lts{} s'_{j+2}\lts{} s''_{j+3} \lts{} \ldots \lts{}s''_{n-1}\lts{}s_n=s$
}
such that any transition of the form $\ttr\HH?a$ is immediately followed by a transition $\HH\tts!a$.

Now, it is possible to check that
\begin{enumerate}[a)]
\item
 for $z=0..j-1$, either $\restrict{s_z}{1} \ltsone{} \restrict{s_{z+1}}{1}$
or $\restrict{s_z}{1} = \restrict{s_{z+1}}{1}$, and

\item
$\restrict{s_j}{1} \ltsone{} \restrict{s'_{j+2}}{1}$.
\end{enumerate}

By doing that for any transition of the form $\ttr\HH?a$ which is immediately followed by a transition $\HH\tts!a$,
we can get a sequence $\restrict{s_0}{1} \lts{}^* \restrict{s}{1}$. So $\restrict{s}{1}\in RS(S_1)$.\\
(\ref{lem:nohatrestrict-ii}) This case can be treated similarly to (\ref{lem:nohatrestrict-i}).
\end{proof}


In a reachable configuration of a connected system, if the states of the gateways are not 
among those introduced by the transformation $\gateway{\cdot}$ and not final, and if the channels between
the gateways are empty, then one of the two gateways is ready to receive messages that
the other gateway is ready to receive from its system's participants.
This property relies on compatibility.

\begin{lemma}
\label{lem:alltrans}
Let $s= (\vec{q},\vec{\varepsilon}) \in RS(S)$ be a reachable configuration of
$S={S_{1}}\connect{\HH}{\KK} {S_{2}}$
such that 
\begin{enumerate}
\item 
 $q_\HH\not\in\widehat{Q_\HH}$ and $q_\KK\not\in\widehat{Q_\KK}$, and
 \item 
 $q_\HH$ and $q_\KK$ are not final.
 \end{enumerate}
 Then either
 \begin{enumerate}[a)]
 \item
 \label{lem:alltrans-a}
all the transitions from $q_\HH$ in $\delta_\HH$ are of the form $(q_\HH,\KK\HH?\_,\_)$ and\\
all the transitions from $q_\KK$ in $\delta_\KK$ are of the form $(q_\KK,\tts\KK?\_,\_)$ with $\tts\neq\HH$, or
\item
\label{lem:alltrans-b}
all the transitions from $q_\KK$ in $\delta_\KK$ are of the form $(q_\KK,\HH\KK?\_,\_)$ and\\
all the transitions from $q_\HH$ in $\delta_\HH$ are of the form $(q_\HH,\tts\HH?\_,\_)$ with $\tts\neq\KK$
\end{enumerate}
\end{lemma}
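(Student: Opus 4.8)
The plan is to reduce the statement to a purely \emph{local} duality between the original states underlying $q_\HH$ and $q_\KK$, and then to recover that duality from compatibility by inspecting the run reaching $s$.

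First I would classify the transitions leaving $q_\HH$ and $q_\KK$. Since $q_\HH\not\in\widehat{Q_\HH}$ and $q_\HH$ is not final, Fact~\ref{fact:uniquesending}(\ref{fact:uniquesending-ii}) gives that every transition from $q_\HH$ in $\delta_\HH$ is an input. As $\KK\not\in\roles_1$, the state $q_\HH$, viewed in $M^1_\HH$, communicates only with roles $\neq\KK$, and since $M^1_\HH$ has no mixed states (Def.~\ref{interface-comp}(\ref{interface-comp-ia})) it is either a \emph{sending} or a \emph{receiving} state of $M^1_\HH$. Reading off Def.~\ref{def:gateway}, these two cases mean precisely that all transitions from $q_\HH$ have the form $\KK\HH?\_$ (when $q_\HH$ is sending in $M^1_\HH$) or all have the form $\tts\HH?\_$ with $\tts\neq\KK$ (when $q_\HH$ is receiving); the same dichotomy holds for $q_\KK$. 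Hence alternatives (a) and (b) are exactly the two ``opposite polarity'' combinations, and the lemma reduces to the single claim: $q_\HH$ is a sending state of $M^1_\HH$ \emph{iff} $q_\KK$ is a receiving state of $M^2_\KK$ (this excludes the two ``same polarity'' combinations).

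Next I would extract, from a run $s_0\lts{}^*s$, the subsequence of transitions performed by the gateway $M_\HH$ and project out its states in $\widehat{Q_\HH}$. By Fact~\ref{fact:uniquesending}(\ref{fact:uniquesending-iii}) the gateway moves occur in matched receive/send pairs, each completed pair corresponding to one original transition of $M^1_\HH$; because $q_\HH\not\in\widehat{Q_\HH}$ no pair is left dangling, so the projected sequence is a path of $M^1_\HH$ ending in $q_\HH$. Writing $\psi_\HH$ for the word of original $\HH$-actions along this path, we get $\psi_\HH\in\lang{M^1_\HH}$ and, by Lemma~\ref{lem:nohatrestrict}(\ref{lem:nohatrestrict-i}), $\restrict{s}{1}\in RS(S_1)$ confirms $q_\HH$ is genuinely reached. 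Symmetrically I obtain $\psi_\KK\in\lang{M^2_\KK}$ reaching $q_\KK$. Since $s$ has empty channels, a message-count on the two interface channels $\KK\HH$ and $\HH\KK$ (each produced by one gateway and consumed by the other) yields that $\HH$'s sends match $\KK$'s receives and $\HH$'s receives match $\KK$'s sends, so $\psi_\HH$ and $\psi_\KK$ have equal length.

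The crux, and the main obstacle, is to upgrade this counting to the word identity $\psi_\KK^\mC=\Dual{\psi_\HH^\mC}$. I would argue by contradiction at the first position $i$ where $\psi_\HH^\mC$ and $\Dual{\psi_\KK^\mC}$ differ (both lie in $\lang{M^1_\HH}^\mC$, the latter by compatibility). The common prefix drives the $?!$-deterministic $M^1_\HH$ to a single state, which by absence of mixed states has a fixed polarity; hence the two symbols at position $i$ share that polarity and can differ only in their message. But FIFO on the relevant interface channel forces the $k$-th transmitted message of that polarity to coincide on both sides, a contradiction. With $\psi_\KK^\mC=\Dual{\psi_\HH^\mC}$ established, I finish using compatibility: if $q_\HH$ is sending it has an outgoing $\HH\tts!a$, so $\psi_\HH^\mC\cdot{!a}\in\lang{M^1_\HH}^\mC$ and thus $\psi_\KK^\mC\cdot{?a}=\Dual{\psi_\HH^\mC\cdot{!a}}\in\lang{M^2_\KK}^\mC$; by determinism this $?$-step leaves $q_\KK$, and by absence of mixed states $q_\KK$ is receiving. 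The symmetric argument handles the receiving case, yielding the required equivalence and hence exactly (a) or (b).
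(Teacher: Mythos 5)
Your proposal is correct and takes essentially the same route as the paper's proof: you project the run onto the two gateways, use Fact~\ref{fact:uniquesending} to decompose their moves into matched receive/send pairs, deduce equal lengths of the projected words from the empty channels, establish that the words satisfy $\mathit{symb}(\restrictup{s}{\HH}) = \Dual{\mathit{symb}(\restrictup{s}{\KK})}$, and then exclude the two same-polarity combinations via language duality, $?!$-determinism and absence of mixed states. The only differences are matters of detail: where the paper merely asserts the word duality (remarking that no other sequences of pairs satisfy the stated properties), you substantiate it with an explicit first-difference-plus-FIFO argument, and you phrase the conclusion as the biconditional ``$q_\HH$ sending in $M^1_\HH$ iff $q_\KK$ receiving in $M^2_\KK$'' rather than the paper's contradiction on its cases (I) and (II) --- a logically equivalent finish.
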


\begin{proof}\,
[{\it Sketch}]
Let \\
\centerline{
$s_0\lts{\elle_1}s_1\lts{\elle_2} \ldots \lts{\elle_{n-1}}s_{n-1}\lts{\elle_n}s_n=s$
}
be a configuration-transitions sequence leading to $s\in RS(S)$.
Let $\restrictup{s}{\HH}$ be the sequence of transitions  $\lts{\elle_{i_{1}}}\ldots\lts{\elle_{i_{\HH}}}$
of the above sequence such that, for any $m\in\Set{i_1,..i_\HH}$, $\elle_{m}\in\delta_\HH$.
We define similarly the sequence $\restrictup{s}{\KK}$.
By definition of $\gateway{\cdot}$, and by the fact that
 $q_\HH\not\in\widehat{Q_\HH}$,   
 we have that $\restrictup{s}{\HH}$ is made of consecutive pairs of the form
$\lts{(\_,\KK\HH?a,\_)}\lts{(\_,\HH\tts!a,\_)}$, with $\tts\neq\KK$, or $\lts{(\_,\tts\HH?a,\_)}\lts{(\_,\HH\KK!a,\_)}$, with $\tts\neq\KK$.
Similarly for $\restrictup{s}{\KK}$.\\
Since $w_{\HH\KK} =w_{\KK\HH} = \varepsilon$ (which immediately follows from the hypothesis
$s= (\vec{q},\vec{\varepsilon})$), the number of 
pairs $\lts{(\_,\KK\HH?\_,\_)}\lts{(\_,\HH\tts!\_,\_)}$ in $\restrictup{s}{\HH}$ is equal to the number of pairs
$\lts{(\_,\ttr\KK?\_,\_)}\lts{(\_,\KK\HH!\_,\_)}$ in $\restrictup{s}{\KK}$; and vice versa.
This implies that $\mid\restrictup{s}{\HH}\mid = \mid  \restrictup{s}{\KK}\mid$.\\
Now, by extending to sequences $\restrictup{s}{\KK}$ and $\restrictup{s}{\HH}$
the following symbols function on pairs\\ 
\centerline{
$\mathit{symb}(\lts{(\_,\KK\HH?a,\_)}\lts{(\_,\HH\tts!a,\_)}) =!a$ \hspace{5mm} $\mathit{symb}(\lts{(\_,\tts\HH?a,\_)}\lts{(\_,\HH\KK!a,\_)}) =?a$
}
(and the clauses got by exchanging $\HH$ and $\KK$),
we get $\mathit{symb}(\restrictup{s}{\HH}) \in \lang{M^1_\HH}^\mC$ and $\mathit{symb}(\restrictup{s}{\KK}) \in \lang{M^2_\KK}^\mC$. Moreover, 
$\mathit{symb}(\restrictup{s}{\HH}) = \Dual{\mathit{symb}(\restrictup{s}{\KK})}$.
Notice that, by  ?!-determinism of $M^1_\HH$ and $M^2_\KK$ and absence of mixed states, there are no other sequences of pairs
for which the previous properties hold.\\
Now, by contradiction, and by recalling that $q_\HH$ and $q_\KK$ are not final,  let us assume that either
\begin{enumerate}[I)]
\item
\label{I}
all the transitions from $q_\HH$ in $\delta_\HH$ are of the form $(q_\HH,\KK\HH?\_,\_)$ and\\
 all the transitions from $q_\KK$ in $\delta_\KK$ are of the form$(q_\KK,\HH\KK?\_,\_)$  or
\item
\label{II}
all the transitions from $q_\HH$ in $\delta_\HH$ are of the form $(q_\HH,\tts\HH?\_,\_)$ with $\tts\neq\HH$ and\\
 all the transitions from $q_\KK$ in $\delta_\KK$ are of the form $(q_\KK,\tts\KK?\_,\_)$ with $\tts\neq\HH$.
\end{enumerate}
Notice that no other possibilities are given because, by compatibility, $M^1_\HH$ and $M^2_\KK$ have no mixed state.\\
If  (\ref{I}) holds, we get a contradictrion, since, by Fact \ref{fact:uniquesending}(\ref{fact:uniquesending-iii}),
we would get both $\mathit{symb}(\restrictup{s}{\HH})\cdot !a\in \lang{M^1_\HH}^\mC$ and
$\mathit{symb}(\restrictup{s}{\KK})\cdot !b \in \lang{M^2_\KK}^\mC$ for some $a$ and $b$, which is impossible by compatibility.\\
In case (\ref{II}) we get a contradiction by arguing analogously as in the previous case.
\end{proof}



\begin{lemma}\label{lem:restrRS}
Let $s= (\vec{q},\vec{\varepsilon}) \in RS(S)$ be a deadlock configuration for $S$.
Then, either $\restrict{s}{1}\in RS(S_1)$ is a deadlock configuration for $S_1$
or $\restrict{s}{2}\in RS(S_2)$ is a deadlock configuration for $S_2$ (or both).
\end{lemma}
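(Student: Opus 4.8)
\textbf{The plan} is to reduce the statement to the two lemmas already established about gateways, so that the only genuine work is to pin down, via interface compatibility, the behaviour of the two gateway states. First I would observe that, since $s = (\vec{q},\vec{\varepsilon})$ is a deadlock, every local state $q_\ttp$ is a receiving state; in particular the two gateway states $q_\HH$ and $q_\KK$ are receiving, hence non-final. By Fact~\ref{fact:uniquesending}(\ref{fact:uniquesending-i}), any intermediate state in $\widehat{Q_\HH}$ (resp. $\widehat{Q_\KK}$) has a single outgoing transition, which is a \emph{sending} one; such a state can therefore never be receiving. Consequently $q_\HH \notin \widehat{Q_\HH}$ and $q_\KK \notin \widehat{Q_\KK}$.

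Second, from $q_\HH \notin \widehat{Q_\HH}$ and $q_\KK \notin \widehat{Q_\KK}$, Lemma~\ref{lem:nohatrestrict} yields $\restrict{s}{1} \in RS(S_1)$ and $\restrict{s}{2} \in RS(S_2)$, so both restrictions are genuine reachable configurations of the respective subsystems. Their buffers are all empty, because $\vec{w} = \vec{\varepsilon}$ entails $\restrict{\vec{w}}{1} = \restrict{\vec{w}}{2} = \vec{\varepsilon}$. Moreover every machine $M^1_\ttp$ with $\ttp \in \roles_1\setminus\Set{\HH}$ and every machine $M^2_\ttp$ with $\ttp \in \roles_2\setminus\Set{\KK}$ is untouched by the gateway construction, so its state stays receiving in the subsystem. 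Thus $\restrict{s}{1}$ is a deadlock for $S_1$ exactly when $q_\HH$ is a receiving state of the \emph{original} interface machine $M^1_\HH$, and symmetrically $\restrict{s}{2}$ is a deadlock for $S_2$ exactly when $q_\KK$ is receiving in $M^2_\KK$.

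The crux is therefore to decide the status of $q_\HH$ in $M^1_\HH$ and of $q_\KK$ in $M^2_\KK$, and here I would invoke Lemma~\ref{lem:alltrans}, whose hypotheses are met ($\vec{w}=\vec{\varepsilon}$, and $q_\HH, q_\KK$ non-intermediate and non-final). In case~\ref{lem:alltrans-a} all transitions out of $q_\KK$ have the form $(q_\KK,\tts\KK?\_,\_)$ with $\tts\neq\HH$; by Fact~\ref{fact:uniquesending}(\ref{fact:uniquesending-iii}) these stem from \emph{original receive} transitions of $\delta^2_\KK$, so $q_\KK$ is receiving in $M^2_\KK$ and $\restrict{s}{2}$ is a deadlock for $S_2$. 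In case~\ref{lem:alltrans-b}, symmetrically, all transitions out of $q_\HH$ have the form $(q_\HH,\tts\HH?\_,\_)$ with $\tts\neq\KK$, which come from original receives in $\delta^1_\HH$, so $q_\HH$ is receiving in $M^1_\HH$ and $\restrict{s}{1}$ is a deadlock for $S_1$. Either way one of the two restrictions is a deadlock, as required.

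The main obstacle is precisely what Lemma~\ref{lem:alltrans} overcomes: a priori one fears the degenerate situation in which both $q_\HH$ and $q_\KK$ correspond to original \emph{sending} states (each gateway merely waiting to receive from the other), in which case neither restriction would be a deadlock. Interface compatibility --- the duality of the accepted languages, together with ?!-determinism and the absence of mixed states --- rules this out, forcing exactly one of the two interfaces to be ready to receive from within its own subsystem. Once Lemma~\ref{lem:alltrans} is applied, the remaining bookkeeping (empty buffers, unchanged non-gateway states) is routine.
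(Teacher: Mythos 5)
Your proof is correct and follows essentially the same route as the paper's: exclude intermediate gateway states via Fact~\ref{fact:uniquesending}(\ref{fact:uniquesending-i}), obtain $\restrict{s}{1}\in RS(S_1)$ and $\restrict{s}{2}\in RS(S_2)$ via Lemma~\ref{lem:nohatrestrict}, and then use the case split of Lemma~\ref{lem:alltrans} together with Fact~\ref{fact:uniquesending}(\ref{fact:uniquesending-iii}) to show that one of $q_\HH$, $q_\KK$ is a receiving state of the original interface machine. Your closing remark correctly identifies the degenerate ``both gateways waiting on each other'' scenario as exactly what compatibility, through Lemma~\ref{lem:alltrans}, rules out.
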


\begin{proof}
By definition of deadlock configuration and by Fact \ref{fact:uniquesending}(\ref{fact:uniquesending-i}), we have that
neither $q_\HH\in\widehat{Q_\HH}$ nor $q_\KK\in\widehat{Q_\KK}$. Otherwise
there will be an output transition from either $q_\HH$ or  $q_\KK$, contradicting $s$ to be a deadlock configuration.
Hence necessarily $q_\HH\not\in\widehat{Q_\HH}$ and $q_\KK\not\in\widehat{Q_\KK}$. 
So, by Lemma \ref{lem:nohatrestrict} we get $\restrict{s}{i}\in RS(S_i)$ for $i=1,2$.
We show that under the assumption either $\restrict{s}{1}\in RS(S_1)$ or $\restrict{s}{2}\in RS(S_2)$ is a deadlock configuration.

Since $s= (\vec{q},\vec{\varepsilon}) \in RS(S)$ is a deadlock configuration for $S$, we have
that for all $\ttr\in\roles$, $q_r$ is a receiving state of $M_\ttr$.
In particular, for all $\ttr\in \roles_1\setminus\Set{\HH}$, $q_r$ is a receiving state
of $M_\ttr = M^1_\ttr$ and
for all $\ttr\in \roles_2\setminus\Set{\KK}$, $q_r$ is a receiving state of $M_\ttr = M^2_\ttr$. 
It remains to show that either $q_\HH$ is a receiving state of $M^1_\HH$
or $q_\KK$ is a receiving state of $M^1_\KK$,
whereby we can assume that $q_\HH$ is a receiving state of
$M_\HH = \gateway{M^1_\HH, \KK}$
and $q_\KK$ is a receiving state of $M_\KK = \gateway{M^2_\KK, \HH}$.

Without loss of generality we consider $q_\HH$. The proof for $q_\KK$ is analogous.
By Lemma~\ref{lem:alltrans} we have two possibilities to take into account
         \begin{description}
         \item
         All the transitions from $q_\HH$ in $\delta_\HH$ are of the form $(q_\HH,\KK\HH?\_,\_)$.\\
         In such a case, still resorting to Lemma \ref{lem:alltrans}, we have that 
         all the transitions from $q_\KK$ in $\delta_\KK$ are of the form $(q_\KK,\tts\KK?\_,\_)$ with $\tts\neq\HH$
         and, by Fact  \ref{fact:uniquesending}(\ref{fact:uniquesending-iii}),\\
          $(q_\KK,\tts\KK?a,\widehat{q'_\KK})\in\delta_\KK$  implies 
          $(\widehat{q'_\KK},\KK\HH!a,q''_\KK)\in\delta_\KK$ and
          $(q_\KK,\tts\KK?a, q''_\KK)\in\delta^2_\KK$.\\
          Hence $q_\KK$ is a receiving state
          of $M^{2}_\KK$. In summary,
           $\restrict{s}{2}$ is a deadlock configuration for $S_2$.
          \item
          All the transitions from $q_\HH$ in $\delta_\HH$ are of the form $(q_\HH,\tts\HH?\_,\_)$ with $\tts\neq\HH$.\\
          In such a case, by Fact \ref{fact:uniquesending}(\ref{fact:uniquesending-iii}), we have that \\
          $(q_\HH,\tts\HH?a,\widehat{q'_\HH})\in\delta_\HH$  implies 
          $(\widehat{q'_\HH},\HH\KK!a,q''_\HH)\in\delta_\HH$ and
          $(q_\HH,\tts\HH?a,q''_\HH)\in\delta^1_\HH.$\\
           Hence $q_\HH$ is a receiving state
          of $M^{1}_\HH$. In summary, $\restrict{s}{1}$ is a deadlock configuration for $S_1$.
          \end{description}  
\end{proof}

\begin{corollary}[Preservation of deadlock-freeness]\hfill\\
\label{prop:dfPreservation}
Let $S_1$ and $S_2$ be deadlock-free.
Then $S = {S_{1}} \connect{\HH}{\KK} {S_{2}}$ is  deadlock-free.
\end{corollary}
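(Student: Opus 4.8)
The plan is to derive the statement immediately from Lemma~\ref{lem:restrRS} by contradiction, since that lemma already contains all the combinatorial work about gateway states. First I would assume, towards a contradiction, that $S = S_1 \connect{\HH}{\KK} S_2$ is not deadlock-free. Then by Def.~\ref{def:safeness}(\ref{def:safeness-i}) there exists a reachable configuration $s \in RS(S)$ that is a deadlock configuration; in particular, being a deadlock, it satisfies $\vec{w} = \vec{\varepsilon}$, so $s$ already has the shape $(\vec{q},\vec{\varepsilon})$ demanded by the hypothesis of Lemma~\ref{lem:restrRS}.

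Next I would apply Lemma~\ref{lem:restrRS} to this $s$. It yields that either $\restrict{s}{1} \in RS(S_1)$ is a deadlock configuration for $S_1$, or $\restrict{s}{2} \in RS(S_2)$ is a deadlock configuration for $S_2$ (or both). In the first case the reachable configuration $\restrict{s}{1}$ of $S_1$ is a deadlock, contradicting the assumed deadlock-freeness of $S_1$; in the second case $\restrict{s}{2}$ contradicts the deadlock-freeness of $S_2$. Either way we reach a contradiction, so no reachable configuration of $S$ can be a deadlock, i.e.\ $S$ is deadlock-free.

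The only point deserving care is the observation that the empty-buffer condition $\vec{w}=\vec{\varepsilon}$ is built into the very definition of a deadlock configuration, so the hypothesis of Lemma~\ref{lem:restrRS} is discharged for free and no separate argument about the gateway channels $w_{\HH\KK}$ and $w_{\KK\HH}$ is needed at this stage. I do not expect any genuine obstacle here: the substance of the preservation argument, namely relating the reachable configurations of $S$ to those of $S_1$ and $S_2$ (Lemma~\ref{lem:nohatrestrict}) and the compatibility-based case analysis on the gateway transitions (Lemma~\ref{lem:alltrans}, invoked inside Lemma~\ref{lem:restrRS}), has already been carried out in the preceding lemmas, so the corollary is essentially a one-line consequence.
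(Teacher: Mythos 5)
Your proposal is correct and matches the paper's own proof exactly: the paper also argues by contradiction, taking a deadlock configuration $s=(\vec{q},\vec{\varepsilon})$ of $S$ and deriving an immediate contradiction from Lemma~\ref{lem:restrRS} together with the deadlock-freeness of $S_1$ and $S_2$. Your added observation that $\vec{w}=\vec{\varepsilon}$ comes for free from Definition~\ref{def:safeness}(\ref{def:safeness-i}) is a valid clarification of the same argument, not a deviation from it.
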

\begin{proof}
By contradiction, let us assume there is an $s\in RS(S)$ such that $s=(\vec{q},\vec{\varepsilon})$ is a deadlock configuration.
We get immediately a contradiction by Lemma \ref{lem:restrRS} and the fact that $S_1$ and $S_2$ are two deadlock-free systems.
\end{proof}

Compatibility of interface roles forces all the messages sent by a gateway to be correctly received by the other one.
This implies that if the gateways both reach final states, the channels connecting them are empty.

\begin{lemma}
\label{lem:wempty}
If $s= (\vec{q},\vec{w}) \in RS(S)$ is a reachable configuration of $S={S_{1}}\connect{\HH}{\KK} {S_{2}}$ 
such that both  states $q_\HH$ and $q_\KK$ are final,
then $w_{\HH\KK} = w_{\KK\HH} = \varepsilon$.
\end{lemma}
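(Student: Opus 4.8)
The plan is to reduce the claim to a statement about the \emph{symbol traces} the two gateways have executed, and then to show that compatibility forces these traces to be exactly dual, so that nothing is left in the connecting channels. Since $q_\HH$ and $q_\KK$ are final they cannot be intermediate gateway states: by Fact~\ref{fact:uniquesending}(\ref{fact:uniquesending-i}) every state in $\widehat{Q_\HH}$ (resp.\ $\widehat{Q_\KK}$) has an outgoing transition, hence $q_\HH\notin\widehat{Q_\HH}$ and $q_\KK\notin\widehat{Q_\KK}$. Consequently, exactly as in the proof of Lemma~\ref{lem:alltrans}, the restricted run $\restrictup{s}{\HH}$ decomposes into \emph{complete} pairs, and I can form $u_\HH=\mathit{symb}(\restrictup{s}{\HH})\in\lang{M^1_\HH}^\mC$ and, symmetrically, $u_\KK=\mathit{symb}(\restrictup{s}{\KK})\in\lang{M^2_\KK}^\mC$. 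Writing $p,m$ (resp.\ $p',m'$) for the number of $?$- and $!$-symbols of $u_\HH$ (resp.\ $u_\KK$), the forwarding discipline of $\gateway{\cdot}$ gives $|w_{\HH\KK}|=p-m'$ and $|w_{\KK\HH}|=p'-m$ (messages pushed onto a connecting channel minus those already consumed). It therefore suffices to prove $u_\HH=\Dual{u_\KK}$, which yields $p=m'$ and $m=p'$ at once.

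Two ingredients feed the comparison. First, \emph{FIFO matching}: the $?$-symbols of $u_\HH$ are precisely the messages $\HH$ pushes onto channel $\HH\KK$, in order, and the $!$-symbols of $u_\KK$ are precisely the messages $\KK$ pops from $\HH\KK$, in order; by the FIFO law the $j$-th popped message equals the $j$-th pushed one whenever both occur, and symmetrically on $\KK\HH$ using the $?$-symbols of $u_\KK$ and the $!$-symbols of $u_\HH$. Second, \emph{finality transfer}: ?!-determinism together with absence of mixed states makes each $\lang{M^i}^\mC$ a deterministic, prefix-closed language, so the state reached is a function of the read word and a word reaches a final state iff it is prefix-maximal in the language; moreover $\gateway{\cdot}$ leaves an original state final exactly when it was final in $M^i$. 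Hence, since $u_\HH$ and $u_\KK$ reach final states, both $u_\HH$ and, by compatibility $\lang{M^1_\HH}^\mC=\Dual{\lang{M^2_\KK}^\mC}$, also $\Dual{u_\KK}$ are prefix-maximal words of $\lang{M^1_\HH}^\mC$.

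I would then read $u_\HH$ and $\Dual{u_\KK}$ simultaneously along $M^1_\HH$ and argue, by induction on the length of their common prefix, that they never diverge. Suppose they first differ after a common prefix of length $k$, reaching a state $q$ of $M^1_\HH$. If $q$ is final then, by prefix-maximality, both words have length $k$ and agree, contradiction; otherwise $q$ is purely sending or purely receiving (no mixed states), so both continuations exist (neither word can stop at the non-final $q$, again by finality transfer) and carry the same polarity. Their messages then coincide by FIFO matching, applied on $\KK\HH$ when $q$ is sending and on $\HH\KK$ when $q$ is receiving, the relevant index being the number of matching symbols already consumed in the common prefix. This contradicts divergence, so $u_\HH=\Dual{u_\KK}$, whence $p=m'$ and $p'=m$ and both connecting channels are empty.

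I expect the main obstacle to be the finality-transfer ingredient: making precise that reaching a final gateway state forces the associated interface trace to be prefix-maximal in $\lang{M^i}^\mC$, and that this maximality is transported across compatibility to the dual machine so that $\Dual{u_\KK}$ cannot be a proper prefix of $u_\HH$ (or conversely). Everything else — the decomposition into complete pairs enabled by $q_\HH,q_\KK\notin\widehat{Q}$, and the bookkeeping $|w_{\HH\KK}|=p-m'$, $|w_{\KK\HH}|=p'-m$ — is routine once Fact~\ref{fact:uniquesending} and the shape of $\gateway{\cdot}$ are in hand.
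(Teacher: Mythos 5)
Your proof is correct and follows essentially the same route as the paper's own sketch: exclude intermediate gateway states via Fact~\ref{fact:uniquesending}(\ref{fact:uniquesending-i}), decompose the gateway runs into complete pairs as in the proof of Lemma~\ref{lem:alltrans}, and use compatibility together with ?!-determinism and absence of mixed states to conclude that $\mathit{symb}(\restrictup{s}{\HH})$ and $\mathit{symb}(\restrictup{s}{\KK})$ are exactly dual, so that pushes and pops on the connecting channels balance out. Your explicit FIFO-matching induction and the finality-transfer/prefix-maximality step simply make rigorous what the paper's sketch glosses over when it asserts that the symbols of $\restrictup{s}{\KK}$ are ``uniquely determined'' by those of $\restrictup{s}{\HH}$ and that $\mid\restrictup{s}{\HH}\mid \,=\, \mid\restrictup{s}{\KK}\mid$.
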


\begin{proof}\,
[{\it Sketch}]
By Fact. \ref{fact:uniquesending}(\ref{fact:uniquesending-i}), $q_\HH\notin \widehat{Q_\HH}$ and $q_\KK\notin \widehat{Q_\KK}$. We now proceed as in the first part of the proof of Lemma \ref{lem:alltrans}.
Let \\
\centerline{
$s_0\lts{\elle_1}s_1\lts{\elle_2} \ldots \lts{\elle_{n-1}}s_{n-1}\lts{\elle_n}s_n=s$
}
be a configuration-transitions sequence leading to $s\in RS(S)$.\\
Let $\restrictup{s}{\HH}$ be the sequence of transitions  $\lts{\elle_{i_1}}\ldots\lts{\elle_{i_\HH}}$
of the above sequence such that, for any $m\in\Set{i_1,..i_\HH}$, $\elle_{m}\in\delta_\HH$.
We define similarly the sequence $\restrictup{s}{\KK}$.
By definition of $\gateway{\cdot}$, and by the fact that
 $q_\HH\not\in\widehat{Q_\HH}$,   
 we have that $\restrictup{s}{\HH}$ is made of consecutive pairs of the form
$\lts{(\_,\KK\HH?a,\_)}\lts{(\_,\HH\tts!a,\_)}$, with $\tts\neq\KK$, or
$\lts{(\_,\tts\HH?a,\_)}\lts{(\_,\HH\KK!a,\_)}$, with $\tts\neq\KK$.
Similarly for $\restrictup{s}{\KK}$.\\
Without loss of generality, let us assume $\restrictup{s}{\HH}$ to begin with a pair of the form
$\lts{(\_,\tts\HH?a,\_)}\lts{(\_,\HH\KK!a,\_)}$.
(Otherwise $\restrictup{s}{\KK}$ would begin with a pair of the form
$\lts{(\_,\tts\KK?a,\_)}\lts{(\_,\KK\HH!a,\_)}$
since $\lang{M^1_\HH}^\mC = \Dual{\lang{M^2_\KK}^\mC}$).)
Hence, up to the role $\tts$, the symbols of $\restrictup{s}{\KK}$ are uniquely determined by $\restrictup{s}{\HH}$ because of
?!-determinism and absence of mixed states.
Then $\mathit{symb}(\restrictup{s}{\HH}) \in \lang{M^1_\HH}$ and $\mathit{symb}(\restrictup{s}{\KK}) \in \lang{M^2_\KK}$. Moreover, 
$\mathit{symb}(\restrictup{s}{\HH}) = \Dual{\mathit{symb}(\restrictup{s}{\KK})}$.
By assuming either $w_{\HH\KK} \neq \varepsilon$ or $w_{\KK\HH} \neq \varepsilon$ we would get a contradiction.
In fact, by the above, $\mid\restrictup{s}{\HH}\mid = \mid  \restrictup{s}{\KK}\mid$.
\end{proof}

\begin{lemma}
\label{lem:restrRSom}
Let $s= (\vec{q},\vec{w}) \in RS(S)$ be an orphan-message configuration for $S$.
Then,
either $\restrict{s}{1}$ is an  orphan-message configuration for $S_1$
or $\restrict{s}{2}$ is an  orphan-message configuration for $S_2$.
\end{lemma}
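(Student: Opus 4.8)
The plan is to transfer the ``all states final'' half of the orphan-message condition down to each subsystem, and then to localise the witnessing non-empty buffer inside the channels of one of the two subsystems.

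First I would observe that, since $s$ is an orphan-message configuration, every machine state is final, in particular $q_\HH$ and $q_\KK$. By Fact~\ref{fact:uniquesending}(\ref{fact:uniquesending-i}) no state in $\widehat{Q_\HH}$ (resp.\ $\widehat{Q_\KK}$) is final, so finality of $q_\HH$ and $q_\KK$ forces $q_\HH\notin\widehat{Q_\HH}$ and $q_\KK\notin\widehat{Q_\KK}$. Lemma~\ref{lem:nohatrestrict} then yields $\restrict{s}{1}\in RS(S_1)$ and $\restrict{s}{2}\in RS(S_2)$; in particular both restrictions are genuine configurations of the respective subsystems, with $q_\HH$ a state of $M^1_\HH$ and $q_\KK$ a state of $M^2_\KK$.

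Next I would check that all machine states recorded in each restriction are final in the corresponding subsystem. For every $\ttp\in\roles_1\setminus\Set{\HH}$ the machine is unchanged ($M_\ttp=M^1_\ttp$), so finality of $q_\ttp$ in $S$ is finality in $S_1$. For $\HH$ itself I would note, directly from the definition of $\gateway{\cdot}$, that a state $q\in Q$ of $M^1_\HH$ acquires an outgoing transition in $\gateway{M^1_\HH,\KK}$ precisely when it already had one in $M^1_\HH$ (each original transition contributes exactly its first half out of $q$, and the inserted states are the only new sources of transitions). Hence $q_\HH$ is final in the gateway iff it is final in $M^1_\HH$, so $q_\HH$ is final in $S_1$; symmetrically $q_\KK$ is final in $S_2$. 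Thus in both $\restrict{s}{1}$ and $\restrict{s}{2}$ every machine is in a final state, and it only remains to place the non-empty buffer.

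The decisive step --- and the one I expect to be the main obstacle --- is to show that the witnessing non-empty channel of $s$ lies inside $C_1$ or $C_2$ rather than between the two gateways. Here I would use that the only inter-system channels ever written to are $\HH\KK$ and $\KK\HH$, every other cross channel being permanently empty. Since $q_\HH$ and $q_\KK$ are final, Lemma~\ref{lem:wempty} gives $w_{\HH\KK}=w_{\KK\HH}=\varepsilon$, so no inter-system channel is non-empty. As $\vec{w}\neq\vec{\varepsilon}$, some channel $\ttp\ttq$ with $w_{\ttp\ttq}\neq\varepsilon$ must therefore belong to $C_1\cup C_2$, say to $C_i$; consequently $\restrict{\vec{w}}{i}\neq\vec{\varepsilon}$. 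Combined with the finality established above, $\restrict{s}{i}$ is an orphan-message configuration for $S_i$, which is exactly the claim.
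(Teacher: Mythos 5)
Your proof is correct and takes essentially the same route as the paper's: Fact~\ref{fact:uniquesending}(\ref{fact:uniquesending-i}) rules out the gateway-inserted states, Lemma~\ref{lem:nohatrestrict} yields $\restrict{s}{i}\in RS(S_i)$, and Lemma~\ref{lem:wempty} empties the channels $\HH\KK$ and $\KK\HH$ so that the witnessing non-empty buffer must lie in $C_1$ or $C_2$. You are in fact slightly more explicit than the paper on two steps it leaves implicit --- that $q_\HH$ is final in $M^1_\HH$ because the gateway construction adds outgoing transitions only at original states that already had one, and that cross-channels other than $\HH\KK$, $\KK\HH$ are never written to --- which only strengthens the argument.
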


\begin{proof}
By Fact \ref{fact:uniquesending}(\ref{fact:uniquesending-i}), no state in $\widehat{Q_\HH}\cup \widehat{Q_\KK}$ can be final. So, by definition
of orphan-message configuration, $q_\HH\not\in \widehat{Q_\HH}$ and $q_\KK\not\in \widehat{Q_\KK}$.
Hence, for $i=1,2$, $\restrict{s}{i}\in RS(S_i)$ by Lemma \ref{lem:nohatrestrict}. 
Now, by definition of orphan-message configuration, both  $q_\HH$ and $q_\KK$ are final
in $M_\HH$ and $M_\KK$ respectively. 
Hence, by Lemma~\ref{lem:wempty}, $w_{\HH\KK} = w_{\KK\HH} = \varepsilon$.
This implies that either $\restrict{w}{1}\neq\vec{\varepsilon}$ or 
$\restrict{w}{2}\neq\vec{\varepsilon}$.  
Moroeover, $q_\HH$ and $q_\KK$ must also be final
in $M^1_\HH$ and $M^2_\KK$  respectively.
The rest of the thesis follows then by definition of orphan-message configuration.
\end{proof}

\begin{corollary}[Preservation of no orphan-message]
\label{prop:nomPreservation}
Let $S_1$ and $S_2$ be such that both  $RS(S_{1})$ and $RS(S_{2})$ do not contain any orphan-message configuration.
Then there is no orphan-message configuration in $RS(S)$.
\end{corollary}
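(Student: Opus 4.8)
The plan is to prove this corollary by contradiction, following exactly the pattern of the proof of Corollary~\ref{prop:dfPreservation} for deadlock-freeness. First I would assume, towards a contradiction, that there exists a reachable configuration $s=(\vec{q},\vec{w}) \in RS(S)$ of the composed system $S={S_{1}}\connect{\HH}{\KK} {S_{2}}$ which is an orphan-message configuration for $S$. The whole point is that all the genuine work has already been discharged in Lemma~\ref{lem:restrRSom}, so at the level of the corollary there is essentially nothing left to do but invoke it.

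Concretely, I would apply Lemma~\ref{lem:restrRSom} directly to $s$: it yields that either $\restrict{s}{1}$ is an orphan-message configuration for $S_1$ or $\restrict{s}{2}$ is an orphan-message configuration for $S_2$. Crucially, the statement (and proof) of Lemma~\ref{lem:restrRSom} also guarantees, via Lemma~\ref{lem:nohatrestrict}, that the relevant restriction $\restrict{s}{i}$ is a genuinely reachable configuration, i.e.\ $\restrict{s}{i}\in RS(S_i)$ for $i=1,2$. Hence one of the subsystems $S_1,S_2$ possesses a reachable orphan-message configuration, contradicting the hypothesis that neither $RS(S_1)$ nor $RS(S_2)$ contains any such configuration.

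There is no real obstacle at this final step: the argument is purely the contrapositive of Lemma~\ref{lem:restrRSom}, so it is a one-line deduction. The only point requiring (minimal) care is that the restricted configuration be reachable in the subsystem rather than merely a syntactic restriction; but since Lemma~\ref{lem:restrRSom} already secures $\restrict{s}{i}\in RS(S_i)$ as part of its conclusion, I may simply cite it. All the substance of the preservation result, in particular the empty-buffer property of the inter-gateway channels provided by Lemma~\ref{lem:wempty} and resting on interface compatibility, is thereby absorbed into the supporting lemmas.
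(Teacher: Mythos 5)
Your proposal is correct and coincides with the paper's own proof, which likewise assumes an orphan-message configuration $s\in RS(S)$ and derives an immediate contradiction from Lemma~\ref{lem:restrRSom} (whose proof indeed secures reachability of the restrictions via Lemma~\ref{lem:nohatrestrict}). Nothing is missing; the corollary is, as you say, just the contrapositive of that lemma.
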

\begin{proof}
By contradiction, let us assume there is an $s\in RS(S)$ which is an orphan-message configuration. We get
immediately a contradiction by Lemma \ref{lem:restrRSom}.
\end{proof}

\begin{proposition}[Preservation of no unspecified reception]
\label{prop:nurPreservation}
Let $S_1$ and $S_2$ be such that both  $RS(S_{1})$ and $RS(S_{2})$ do not contain any unspecified reception configuration.
Then there is no unspecified reception configuration in $RS(S)$.
\end{proposition}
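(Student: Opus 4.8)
The plan is to argue by contradiction, reusing the two-stage pattern behind Corollary~\ref{prop:dfPreservation} and Corollary~\ref{prop:nomPreservation}: assume some $s=(\vec{q},\vec{w})\in RS(S)$ is an unspecified reception configuration with a witnessing role $\ttr\in\roles$ whose state $q_\ttr$ is receiving but blocked on every buffer it could read from, and then either transport this situation down to $\restrict{s}{1}$ or $\restrict{s}{2}$ (contradicting safety of $S_1$ or $S_2$), or show it is outright impossible. First I would normalise $s$ so that neither gateway sits in an inserted state. Since $q_\ttr$ is receiving we have $q_\ttr\notin\widehat{Q}$ by Fact~\ref{fact:uniquesending}(\ref{fact:uniquesending-i}); if $q_\HH\in\widehat{Q_\HH}$ (resp. $q_\KK\in\widehat{Q_\KK}$) I fire its unique pending output, which by Fact~\ref{fact:uniquesending}(\ref{fact:uniquesending-i}) leads out of $\widehat{Q_\HH}$, reaching some $s^\ast\in RS(S)$ with $q^\ast_\HH\notin\widehat{Q_\HH}$ and $q^\ast_\KK\notin\widehat{Q_\KK}$. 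Firing an output only appends to the tail of a buffer, hence never changes any head; since $\ttr$ is blocked precisely because each relevant buffer is non-empty with the ``wrong'' head, $s^\ast$ is still an unspecified reception configuration with the same witness $\ttr$ (note $\ttr$ differs from the flushed role). By Lemma~\ref{lem:nohatrestrict} we then have $\restrict{s^\ast}{1}\in RS(S_1)$ and $\restrict{s^\ast}{2}\in RS(S_2)$.

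Next I would split on $\ttr$. If $\ttr\in\roles_1\setminus\Set{\HH}$ (symmetrically in $\roles_2\setminus\Set{\KK}$), then $M_\ttr=M^1_\ttr$, its receive transitions are unchanged, and all its senders lie in $\roles_1$, so the relevant buffers $w_{\tts\ttr}$ ($\tts\in\roles_1$) are exactly those recorded in $\restrict{s^\ast}{1}$; hence $\restrict{s^\ast}{1}$ is an unspecified reception configuration of $S_1$, a contradiction. The case $\ttr=\HH$ (symmetrically $\ttr=\KK$) needs care: since $q_\HH\notin\widehat{Q_\HH}$ and $M^1_\HH$ has no mixed states, the underlying state of $M^1_\HH$ is either receiving or sending. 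If it is receiving, Fact~\ref{fact:uniquesending}(\ref{fact:uniquesending-iii}b) shows the gateway's reads at $q_\HH$ all come from ordinary roles of $\roles_1$ on channels of $C_1$ and coincide with the reads of $M^1_\HH$, so again $\restrict{s^\ast}{1}$ is an unspecified reception configuration of $S_1$, contradiction.

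The remaining, genuinely hard, case is $\ttr=\HH$ with the underlying $M^1_\HH$-state sending: then by Fact~\ref{fact:uniquesending}(\ref{fact:uniquesending-iii}a) every outgoing transition of $q_\HH$ in the gateway is a read $\KK\HH?a$ from the inter-gateway buffer, and blockedness says $w^\ast_{\KK\HH}\neq\varepsilon$ while $\head(w^\ast_{\KK\HH})$ is not among the expected $a$'s. I would rule this out with the trace bookkeeping of Lemmas~\ref{lem:alltrans} and~\ref{lem:wempty}: set $u_\HH=\mathit{symb}(\restrictup{s^\ast}{\HH})\in\lang{M^1_\HH}^\mC$ and $u_\KK=\mathit{symb}(\restrictup{s^\ast}{\KK})\in\lang{M^2_\KK}^\mC$, and read $M^1_\HH$ as a genuine DFA over $\Set{!,?}\times\mathbb{A}$ (legitimate because ?!-determinism ignores channel names and there are no mixed states). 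Compatibility gives $\Dual{u_\KK}\in\lang{M^1_\HH}^\mC$, while the FIFO discipline identifies the $!$-letters of $u_\HH$ with $\HH$'s consumptions from $w_{\KK\HH}$, the $!$-letters of $\Dual{u_\KK}$ with $\KK$'s productions into $w_{\KK\HH}$, and dually for the other buffer. The crux is to prove $u_\HH$ and $\Dual{u_\KK}$ prefix-comparable: at a first point of disagreement the common deterministic state has a fixed polarity, and matching the next letter against the corresponding FIFO position forces the two letters to coincide, a contradiction. Since $w^\ast_{\KK\HH}\neq\varepsilon$, $\KK$ has produced strictly more than $\HH$ has consumed, so $\Dual{u_\KK}$ properly extends $u_\HH$; as $q_\HH$ is a sending state, the first extra letter is an output carrying exactly $\head(w^\ast_{\KK\HH})$, which therefore labels a transition $(q_\HH,\KK\HH?\head(w^\ast_{\KK\HH}),\_)$ of the gateway. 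Thus $\HH$ can in fact consume the head, contradicting blockedness.

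I expect this prefix-comparability step to be the main obstacle: it is exactly where interface compatibility is indispensable, combining duality of the $\mC$-languages with the ?!-determinism and no-mixed-states requirements of Definition~\ref{interface-comp}, and it is the analogue -- for a non-empty inter-gateway buffer -- of the length-balancing computation carried out with empty buffers in Lemmas~\ref{lem:alltrans} and~\ref{lem:wempty}. The easy cases and the receiving sub-case of $\ttr=\HH$ are then routine transfers of the witness to a subsystem, and the proposition follows since $S_1$ and $S_2$ are assumed free of unspecified reception configurations.
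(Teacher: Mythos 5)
Your proof is correct and follows the same overall strategy as the paper's: argue by contradiction from a witness $\ttr$, split on whether the gateway states lie in $\widehat{Q}$ and on where $\ttr$ lives, transfer the witness into a subsystem via Lemma~\ref{lem:nohatrestrict} and Fact~\ref{fact:uniquesending}(\ref{fact:uniquesending-iii}), and handle the one genuinely asynchronous case (all reads at $q_\HH$ coming from $\KK$) with the $\mathit{symb}$-trace bookkeeping of Lemmas~\ref{lem:alltrans} and~\ref{lem:wempty} plus compatibility. Two local deviations are worth noting. First, where you normalise \emph{forward} --- firing the unique pending output of any gateway caught in a state of $\widehat{Q}$, and observing that outputs only append to buffer tails, so lengths grow and heads are untouched and the witness (which, being in a receiving state, cannot be the flushed gateway) survives --- the paper instead steps \emph{backward} in the case $q_\HH \in \widehat{Q_\HH}$: it takes the unique transition $(q'_\HH,\ttp\HH?a,q_\HH)\in\delta_\HH$ entering the inserted state, obtains a predecessor $s'\lts{\ttp\HH?a}s$ with $q'_\HH\notin\widehat{Q_\HH}$, and argues $s'$ is still an unspecified reception configuration. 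Both moves are sound; yours has the mild advantage of flushing $\HH$ and $\KK$ simultaneously, which makes the decomposition of $\restrictup{s}{\HH}$ and $\restrictup{s}{\KK}$ into complete two-step pairs unconditional, a point the paper's sketch passes over (it never explicitly secures $q_\KK\notin\widehat{Q_\KK}$ before invoking the trace machinery). Second, your prefix-comparability argument for $u_\HH$ and $\Dual{u_\KK}$ --- running both through the channel-forgotten automaton of $M^1_\HH$ (deterministic by ?!-determinism), using absence of mixed states to fix the polarity at a putative first disagreement, and using FIFO on the two inter-gateway channels to force the $(k{+}1)$-th consumption and production letters to coincide --- is precisely the content that the paper compresses into the single assertion that, by $\lang{M^1_\HH}^\mC = \Dual{\lang{M^2_\KK}^\mC}$, if $w_{\KK\HH}\in b\cdot\mathbb{A}^*$ then $\restrictup{s}{\KK?}_{/n+1}=\restrictup{s}{\KK?}_{/n}\cdot b$ and $\exists j.\,a_j=b$. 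So at the one point where the published sketch leans hardest on Definition~\ref{interface-comp}, your write-up is in fact the more complete of the two; everything else coincides with the paper's proof up to the direction of the hat-state normalisation.
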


\begin{proof}\,
[{\it Sketch}]
By contradiction, let us assume there is an $s= (\vec{q},\vec{w})\in RS(S)$ which is an unspecified reception configuration.
Moreover, let $\ttr \in \textbf{P}$ and $q_\ttr$  be the receiving state of $M_\ttr$ prevented from 
receiving any message from any of its buffers (Definition \ref{def:safeness}(\ref{def:safeness-ur})).
Without loss of generality, we assume $\ttr\in\roles_1$.
The following cases can occur:
\begin{description}
\item 
$q_\HH\not\in \widehat{Q_\HH}$.\\
By Lemma \ref{lem:nohatrestrict} we get $\restrict{s}{1}\in RS(S_1)$. 
Two sub-cases are possible:
\begin{description}
\item
$\ttr\neq\HH$\\
In such a case we get immediately a contradiction by  the hypothesis that $RS(S_{1})$ does not contain any unspecified reception configuration.
\item
$\ttr=\HH$\\
$q_\HH\,(= q_\ttr)$ is hence a receiving state. So let $\Set{(q_\HH,\tts_j\HH?a_j,\widehat{q_j})}_{j=1..m}$ be the set
of all the outgoing transitions from $q_\HH$ in $\delta_\HH$. 
By definition of unspecified reception configuration,  for any $j=1..m$, $\mid w_{\tts_j\HH}\mid > 0$
and $w_{\tts\HH}\not\in a_j\cdot\mathbb{A}^*$. 
By compatibility, and in particular by the absence of mixed states, we have just the following two possibilities:
    \begin{description}
\item
$\tts_j\neq\KK$ for any $j=1..m$.\\
By Fact \ref{fact:uniquesending}(\ref{fact:uniquesending-iii}) and definition of $\gateway{\cdot}$ we have that  
$$[(q_\HH,\tts_j\HH?a_j,\widehat{q_j})\in\delta_\HH  ~~\wedge ~~
\tts_j\neq\KK] \iff  
(q_\HH,\tts\HH?a_j,q_j)\in\delta^1_\HH$$
 This implies $\restrict{s}{1}$ to be an  unspecified reception configuration for $S_1$. Contradiction.
 \item
$\tts_j = \KK$ for any $j=1..m$.\\
Let $\restrictup{s}{\HH}$ and $\restrictup{s}{\KK}$ be defined as in the proofs of Lemmas
\ref{lem:alltrans} and \ref{lem:wempty}. We define now
\begin{itemize}
\item[-]
 $\restrictup{s}{\HH!}$ as the sequence made of the transition pairs in $\restrictup{s}{\HH}$
of the form  $\lts{(\_,\KK\HH?a,\_)}\lts{(\_,\HH\tts!a,\_)}$ with $\tts\neq\KK$,  and

\item[-]
$\restrictup{s}{\KK?}$ as the sequence made of the transition pairs in $\restrictup{s}{\KK}$
of the form  $\lts{(\_,\tts\KK?a,\_)}\lts{(\_,\KK\HH!a,\_)}$ with $\tts\neq\HH$.


\end{itemize}
Let now $n=\mid \restrictup{s}{\HH!} \mid$ and let $\restrictup{s}{\HH!}_{/n}$ be the sequence of the messages of the first $n$
elements of $\restrictup{s}{\HH!}$.
By compatibility, in particular $\lang{M^1_\HH}^\mC = \Dual{\lang{M^2_\KK}^\mC}$, it follows that,
if $w_{\KK\HH} \in  b\cdot\mathbb{A}^*$, then $\restrictup{s}{\KK?}_{/n+1}=\restrictup{s}{\KK?}_{/n}\cdot b$
and $\exists j. a_j = b$. So contradicting  that, for any $j=1..m$, 
$w_{\KK\HH}\not\in a_j\cdot\mathbb{A}^*$.
     \end{description}
 
\end{description}
\item 
$q_\HH=\widehat{q}\in \widehat{Q_\HH}$.\\ 
By Fact \ref{fact:uniquesending}(\ref{fact:uniquesending-i})
$q_\HH\in \widehat{Q_\HH}$ is a sending state such that $(q_\HH,\HH\tts!a,q''_\HH)\in{\delta}_\HH$. Hence it is impossible that $\ttr=\HH$.
So, let $\ttr\neq\HH$.
In such a case, by definition of $\gateway{\cdot}$, 
we have necessarily a unique transition of the form $(q'_\HH,\ttp\HH?a,q_\HH)\in{\delta}_\HH$.
Moreover, $q'_\HH\not\in \widehat{Q_\HH}$. So 
 there exists necessarily an element $s'\in RS(S)$ such that
$s'=(\vec{q'},\vec{w'})\lts{\ttp\HH?a}s$ with $q'_\HH\not\in \widehat{Q_\HH}$. It follows that also $s'$ is 
an unspecified-reception configuration and $\restrict{s'}{1} \in RS(S_1)$.
Then we get a contradiction by arguing like in the first case, sub-case $\ttr \neq \HH$.
\end{description}
\end{proof}

We are now ready to state our main results.

\begin{corollary}[Safety properties preservation]\hfill
\label{cor:safetypres}

\begin{enumerate}
\item
Let $S={S_{1}}\connect{\HH}{\KK} {S_{2}}$ be the system composed
from systems $S_1$ and $S_2$ with compatible CFSMs for the roles $\HH$ and $\KK$.
If $S_1$ and $S_2$ are both safe, then $S$ is safe.
\item
Let $\GTIR{\GG}{\mathbf{I}} =\GTIR{\GTIR{{\GG}_1}{\mathbf{H}}\connect{\HH}{\KK}\GTIR{{\GG}_2}{\mathbf{K}}}{\mathbf{I}}$
be the GTIR composed from GTIRs $\GTIR{\GG_1}{\mathbf{H}}$
and $\GTIR{\GG_2}{\mathbf{K}}$ via compatible interface roles $\HH$ and $\KK$.
If $\Sem{\GTIR{\GG_1}{\mathbf{H}}}$ and $\Sem{\GTIR{\GG_2}{\mathbf{K}}}$ are both safe,
then $\Sem{\GTIR{\GG}{\mathbf{I}}}$ is safe.
\end{enumerate}
\end{corollary}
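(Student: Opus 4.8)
The plan is to assemble the final statement directly from the three preservation results already established, since all the technical work has been carried out in the preceding lemmas and their corollaries. The corollary is a corollary in the literal sense.

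For part 1, I would recall that, by Definition~\ref{def:safeness}, a communicating system is safe exactly when none of its reachable configurations is a deadlock, an orphan-message, or an unspecified reception configuration. Assuming $S_1$ and $S_2$ are both safe, I would establish these three negative properties for $S = {S_{1}}\connect{\HH}{\KK}{S_{2}}$ one at a time. First, since $S_1$ and $S_2$ are safe they are in particular deadlock-free, so Corollary~\ref{prop:dfPreservation} yields that $S$ is deadlock-free. Next, safety of $S_1$ and $S_2$ means that neither $RS(S_1)$ nor $RS(S_2)$ contains an orphan-message configuration, so Corollary~\ref{prop:nomPreservation} gives that $RS(S)$ contains no orphan-message configuration. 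Finally, safety of $S_1$ and $S_2$ means that neither reachable set contains an unspecified reception configuration, so Proposition~\ref{prop:nurPreservation} rules these out for $RS(S)$ as well. The three facts together are precisely the definition of $S$ being safe.

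For part 2, I would simply unfold the semantics. By Definition~\ref{def.semgtir}, $\Sem{\GTIR{\GG}{\mathbf{I}}} = \Sem{\GTIR{\GG_1}{\mathbf{H}}}\connect{\HH}{\KK}\Sem{\GTIR{\GG_2}{\mathbf{K}}}$, and since $\GTIR{\GG}{\mathbf{I}}$ is a proper GTIR, Definition~\ref{def:gtir}(ii) guarantees that $\HH$ and $\KK$ are interface compatible, i.e.\ the CFSMs $M^1_\HH$ and $M^2_\KK$ interpreting them are compatible in the sense of Definition~\ref{interface-comp}. Hence the composition $\Sem{\GTIR{\GG_1}{\mathbf{H}}}\connect{\HH}{\KK}\Sem{\GTIR{\GG_2}{\mathbf{K}}}$ meets exactly the General assumption of this section, and I would conclude by invoking part~1 with $S_1 = \Sem{\GTIR{\GG_1}{\mathbf{H}}}$ and $S_2 = \Sem{\GTIR{\GG_2}{\mathbf{K}}}$.

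There is no genuine obstacle here: the statement merely packages the deadlock-, orphan- and reception-preservation results into a single safety-preservation claim and then transports it to the GTIR level through the semantic definition. The only point requiring care is the bookkeeping check that, under the hypotheses that $S_1$ and $S_2$ are safe and that $\HH \interfacecomp \KK$, the precise premises of each of the three invoked results hold --- in particular that compatibility, needed implicitly by Proposition~\ref{prop:nurPreservation} via Lemmas~\ref{lem:alltrans} and~\ref{lem:wempty}, is in force, which is exactly what the well-formedness condition of Definition~\ref{def:gtir}(ii) supplies.
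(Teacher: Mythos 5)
Your proposal is correct and is exactly the argument the paper intends: the corollary is stated without explicit proof precisely because part~1 is the immediate conjunction of Corollary~\ref{prop:dfPreservation}, Corollary~\ref{prop:nomPreservation} and Proposition~\ref{prop:nurPreservation} with Definition~\ref{def:safeness}, and part~2 follows by unfolding Definition~\ref{def.semgtir} and using the compatibility guaranteed by Definition~\ref{def:gtir}(ii). Your bookkeeping remark that compatibility is what licenses the invocation of the three preservation results (via the section's general assumption) is the right point of care and matches the paper's setup.
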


As a consequence, by induction on the pairwise composition of GTIRs,
we obtain the following desired result.

\begin{corollary}
Let $\GTIR{\GG}{\mathbf{I}}$ be a GTIR  such that, for any global graph $G\in \components (\GTIR{G}{\mathbf{I}})$,\\
the system $(\projecton{G}{\ttp})_{\ttp\in\roles(G)}$ is safe.
Then $\Sem{\GTIR{\GG}{\mathbf{I}}}$ is safe.
\end{corollary}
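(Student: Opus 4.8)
The plan is to proceed by structural induction on the construction of the GTIR $\GTIR{\GG}{\mathbf{I}}$, following exactly the two-case grammar of Definition~\ref{def:pre-gtir}. The induction is on the number of composition operators occurring in $\GG$ (equivalently, on the cardinality of $\components(\GTIR{\GG}{\mathbf{I}})$), and the only real engine of the argument is the already-established Corollary~\ref{cor:safetypres}(2); everything else is bookkeeping on how the sets of components decompose.

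For the base case, suppose $\GTIR{\GG}{\mathbf{I}}$ is a basic GTIR, i.e.\ $\GTIR{G}{\mathbf{I}}$ for a single global type $G$. Then $\components(\GTIR{G}{\mathbf{I}}) = \Set{G}$, so the hypothesis applied to this single component $G$ states exactly that $(\projecton{G}{\ttp})_{\ttp\in\roles(G)}$ is safe. Since by Definition~\ref{def.semgtir} this system is precisely $\Sem{\GTIR{G}{\mathbf{I}}}$, I conclude that $\Sem{\GTIR{\GG}{\mathbf{I}}}$ is safe and the base case is done.

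For the inductive step, suppose $\GTIR{\GG}{\mathbf{I}} = \GTIR{\GTIR{\GG_1}{\mathbf{H}}\connect{\HH}{\KK}\GTIR{\GG_2}{\mathbf{K}}}{\mathbf{I}}$. Because this is a proper GTIR, Definition~\ref{def:gtir}(ii) guarantees that $\GTIR{\GG_1}{\mathbf{H}}$ and $\GTIR{\GG_2}{\mathbf{K}}$ are themselves GTIRs and that $\HH$ and $\KK$ are interface compatible, $\HH\interfacecomp\KK$. By Definition~\ref{def:pre-gtir}(ii) we have $\components(\GTIR{\GG}{\mathbf{I}}) = \components(\GTIR{\GG_1}{\mathbf{H}}) \cup \components(\GTIR{\GG_2}{\mathbf{K}})$, so the global-type hypothesis of the statement, holding for every $G$ in the union, holds in particular for every component of $\GTIR{\GG_1}{\mathbf{H}}$ and for every component of $\GTIR{\GG_2}{\mathbf{K}}$. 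The two sub-GTIRs thus satisfy the inductive hypothesis, giving that $\Sem{\GTIR{\GG_1}{\mathbf{H}}}$ and $\Sem{\GTIR{\GG_2}{\mathbf{K}}}$ are both safe. Since $\HH\interfacecomp\KK$, I can now invoke Corollary~\ref{cor:safetypres}(2) directly on $\GTIR{\GG}{\mathbf{I}}$ to conclude that $\Sem{\GTIR{\GG}{\mathbf{I}}}$ is safe.

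I do not expect a genuine obstacle here: the whole difficulty of safety preservation under a single gateway composition has already been discharged in Corollaries~\ref{prop:dfPreservation} and~\ref{prop:nomPreservation}, Proposition~\ref{prop:nurPreservation}, and their packaging in Corollary~\ref{cor:safetypres}. The only point needing a moment's care is verifying that the inductive hypothesis applies to the two immediate sub-GTIRs — namely that they are themselves proper GTIRs (so that their semantics is defined and compatibility of the connecting interface roles is available) and that safety of all components of the composite really does restrict to safety of all components of each factor. Both facts are immediate from the union and well-formedness clauses of Definitions~\ref{def:pre-gtir} and~\ref{def:gtir}, so the argument goes through cleanly.
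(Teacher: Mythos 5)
Your proof is correct and follows exactly the route the paper intends: the paper derives this corollary ``by induction on the pairwise composition of GTIRs'' from Corollary~\ref{cor:safetypres}(2), which is precisely your structural induction with the base case discharged by Definition~\ref{def.semgtir} and the inductive step by compatibility from Definition~\ref{def:gtir}(ii). Nothing is missing.
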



\section{Conclusions}
\label{sect:conclusions}

We have proposed the GTIR formalism (Global Types with Interface Roles) to support the usage of global types
in the context of open systems,
whenever the underlying global type formalism,  like~\cite{DY12,TY15,TG18},  allows the interpretation of global
types in terms of systems of communicating finite state machines (CFSMs),
Our main result is that safety properties (deadlock-freeness, no orphan messages, no unspecified receptions)
are preserved by composing open systems when interface roles are compatible.

In \cite[Sect. 6]{GMY80}, the authors use the same compatibility notion for CFSMs showing that a system made of two CFSMs, which both are deterministic and do not have mixed states, is free from deadlocks and unspecified receptions.
The general aim of enhancing the expressive power of global type formalisms has been variously pursued in the literature.
For example, in \cite{Tanter14} the authors define a formalism where global types with initial and end points can be combined. 
The results in \cite{MY13} are slightly more related to our approach.  It is shown how to define choreographies partially specified, where only some processes are provided.
In case two choreographies are composable, some completely specified process of one
can be used instead of the unspecified ones in the other.

Even if some loose connections can be envisaged with the approach of {\em interface automata} of \cite{deAlfaro2001,deAlfaro2005},
our approach to open global types diverges from them in many relevant points:
First of all, an interface automaton describes the communication abilities of an automaton
with its environment in terms of input and output actions while internal behavior is described by internal actions.
GTIRs, however, emulate the expected behavior of the environment by providing distinguished interface roles
and their CFSMs, while internal behavior is modelled by the CFSMs of the other roles.
Interface automata
rely on synchronous communication while we consider asynchronous communication via FIFO buffers.
The crucial idea of compatibility for interface automata is that no error state should be reachable in the synchronous product
of two automata. An error state is a state, in which one automaton wants to send a message to the other but the other
automaton is not ready to accept it. This situation is related to unspecified reception in the asynchronous context.
The speciality of interface automata is, however, that an error state must be autonomously reachable,
i.e.\ without influence of the environment.
Since interface automata use synchronous message passing, the problem of orphans is empty.
Moreover, the theory of interface automata does not consider deadlock-freedom.
On the other hand, interface automata consider also refinement and preservation of compatibility by refinement.

In the future, we first want to study  whether our conditions for compatibility could be relaxed
still guaranteeing preservation of safety. 
Moreover, 
 it would be worth taking into account, besides safety properties, also liveness properties.
In particular,  the  generalised global types of  \cite{DY12}, 
at the cost of being less expressive than global types in \cite{TY15,TG18},
guarantee also liveness properties.
Properties preserved by connecting CFSM systems via {\em gateways} are worth to be investigated also
for  systems unrelated to global type formalisms.
For instance, a variety of communication properties are formalised for asynchronous I/O-transition systems in~\cite{HaddadHM13}.
Preservation by composition is shown there but using bags instead of FIFO buffers for communication.
Finally, the current composition operator for GTIRs is binary and thus can only lead to tree-like
compositions of global graphs. Therefore it would be challenging to see, how we could get
cyclic architectures.

\paragraph{Acknowledgements}
We are grateful to the anonymous referees for several helpful comments and suggestions.
We also thank Emilio Tuosto for some macros used to draw Figure \ref{fig:examplegg}.
The first author is also thankful to Mariangiola Dezani for her everlasting support.

%
\label{sect:bib}

\bibliographystyle{plainurl}
\bibliography{session}

\begin{thebibliography}{10}

\bibitem{BZ83}
Daniel Brand and Pitro Zafiropulo.
\newblock On communicating finite-state machines.
\newblock {\em J. {ACM}}, 30(2):323--342, 1983.
\newblock \href {http://dx.doi.org/10.1145/322374.322380}
  {\path{doi:10.1145/322374.322380}}.

\bibitem{CHY07}
Marco Carbone, Kohei Honda, and Nobuko Yoshida.
\newblock A calculus of global interaction based on session types.
\newblock {\em Electr. Notes Theor. Comput. Sci.}, 171(3):127--151, 2007.
\newblock \href {http://dx.doi.org/10.1016/j.entcs.2006.12.041}
  {\path{doi:10.1016/j.entcs.2006.12.041}}.

\bibitem{CDP12}
Giuseppe Castagna, Mariangiola Dezani{-}Ciancaglini, and Luca Padovani.
\newblock On global types and multi-party session.
\newblock {\em Logical Methods in Computer Science}, 8(1), 2012.
\newblock \href {http://dx.doi.org/10.2168/LMCS-8(1:24)2012}
  {\path{doi:10.2168/LMCS-8(1:24)2012}}.

\bibitem{CF05}
G{\'{e}}rard C{\'{e}}c{\'{e}} and Alain Finkel.
\newblock Verification of programs with half-duplex communication.
\newblock {\em Inf. Comput.}, 202(2):166--190, 2005.
\newblock \href {http://dx.doi.org/10.1016/j.ic.2005.05.006}
  {\path{doi:10.1016/j.ic.2005.05.006}}.

\bibitem{CDYP16}
Mario Coppo, Mariangiola Dezani{-}Ciancaglini, Nobuko Yoshida, and Luca
  Padovani.
\newblock Global progress for dynamically interleaved multiparty sessions.
\newblock {\em Mathematical Structures in Computer Science}, 26(2):238--302,
  2016.
\newblock \href {http://dx.doi.org/10.1017/S0960129514000188}
  {\path{doi:10.1017/S0960129514000188}}.

\bibitem{deAlfaro2001}
Luca de~Alfaro and Thomas~A. Henzinger.
\newblock Interface automata.
\newblock In {\em Proceedings of the 8th European Software Engineering
  Conference held jointly with 9th {ACM} {SIGSOFT} International Symposium on
  Foundations of Software Engineering 2001, Vienna, Austria, September 10-14,
  2001}, pages 109--120, 2001.
\newblock \href {http://dx.doi.org/10.1145/503209.503226}
  {\path{doi:10.1145/503209.503226}}.

\bibitem{deAlfaro2005}
Luca de~Alfaro and Thomas~A. Henzinger.
\newblock Interface-based design.
\newblock In {\em Engineering Theories of Software Intensive Systems:
  Proceedings of the NATO Advanced Study Institute on Engineering Theories of
  Software Intensive Systems Marktoberdorf, Germany 3--15 August 2004}, pages
  83--104, Dordrecht, 2005. Springer Netherlands.
\newblock \href {http://dx.doi.org/10.1007/1-4020-3532-2_3}
  {\path{doi:10.1007/1-4020-3532-2_3}}.

\bibitem{DY12}
Pierre{-}Malo Deni{\'{e}}lou and Nobuko Yoshida.
\newblock Multiparty session types meet communicating automata.
\newblock In {\em {ESOP'12}}, pages 194--213, 2012.
\newblock \href {http://dx.doi.org/10.1007/978-3-642-28869-2_10}
  {\path{doi:10.1007/978-3-642-28869-2_10}}.

\bibitem{GMY80}
Mohamed~G. Gouda, Eric~G. Manning, and Yao{-}Tin Yu.
\newblock On the progress of communication between two machines.
\newblock {\em Information and Control}, 63(3):200--2016, 1984.
\newblock \href {http://dx.doi.org/10.1016/S0019-9958(84)80014-5}
  {\path{doi:10.1016/S0019-9958(84)80014-5}}.

\bibitem{HaddadHM13}
Serge Haddad, Rolf Hennicker, and Mikael~H. M{\o}ller.
\newblock Channel properties of asynchronously composed petri nets.
\newblock In {\em Petri Nets}, volume 7927 of {\em Lecture Notes in Computer
  Science}, pages 369--388. Springer, 2013.
\newblock \href {http://dx.doi.org/10.1007/978-3-642-38697-8_20}
  {\path{doi:10.1007/978-3-642-38697-8_20}}.

\bibitem{H16}
Rolf Hennicker.
\newblock A calculus for open ensembles and their composition.
\newblock In {\em Leveraging Applications of Formal Methods, Verification and
  Validation: Foundational Techniques - 7th International Symposium, ISoLA
  2016, Imperial, Corfu, Greece, October 10-14, 2016, Proceedings, Part {I}},
  pages 570--588, 2016.
\newblock \href {http://dx.doi.org/10.1007/978-3-319-47166-2_40}
  {\path{doi:10.1007/978-3-319-47166-2_40}}.

\bibitem{HB18}
Rolf Hennicker and Michel Bidoit.
\newblock Compatibility properties of synchronously and asynchronously
  communicating components.
\newblock {\em Logical Methods in Computer Science}, 14(1), 2018.
\newblock \href {http://dx.doi.org/10.23638/LMCS-14(1:1)2018}
  {\path{doi:10.23638/LMCS-14(1:1)2018}}.

\bibitem{TY15}
Julien Lange, Emilio Tuosto, and Nobuko Yoshida.
\newblock From communicating machines to graphical choreographies.
\newblock In {\em {POPL} 2015}, pages 221--232, 2015.
\newblock \href {http://dx.doi.org/10.1145/2676726.2676964}
  {\path{doi:10.1145/2676726.2676964}}.

\bibitem{MY13}
Fabrizio Montesi and Nobuko Yoshida.
\newblock Compositional choreographies.
\newblock In {\em {CONCUR}'13}, volume 8052 of {\em LNCS}, pages 425--439.
  Springer, 2013.
\newblock \href {http://dx.doi.org/10.1007/978-3-642-40184-8_30}
  {\path{doi:10.1007/978-3-642-40184-8_30}}.

\bibitem{Tanter14}
Nicolas Tabareau, Mario S{\"{u}}dholt, and {\'{E}}ric Tanter.
\newblock Aspectual session types.
\newblock In {\em {MODULARITY} '14}, pages 193--204. {ACM}, 2014.
\newblock \href {http://dx.doi.org/10.1145/2577080.2577085}
  {\path{doi:10.1145/2577080.2577085}}.

\bibitem{TG18}
Emilio Tuosto and Roberto Guanciale.
\newblock Semantics of global view of choreographies.
\newblock {\em J. Log. Algebr. Meth. Program.}, 95:17--40, 2018.
\newblock \href {http://dx.doi.org/10.1016/j.jlamp.2017.11.002}
  {\path{doi:10.1016/j.jlamp.2017.11.002}}.

\end{thebibliography}


\end{document}
